\documentclass[11pt]{article} 
\usepackage{fullpage}
\usepackage{times,epsfig,epsf,psfrag,array,amsmath,amssymb,verbatim}
\usepackage{url,comment,enumerate,graphicx,graphics,wrapfig}
\usepackage[usenames,dvipsnames,svgnames,table]{xcolor}
\usepackage[ruled,linesnumbered,vlined]{algorithm2e}
\usepackage{multirow,amsthm}
\usepackage{pdfpages,framed,tcolorbox}
\usepackage{tabularx}
\usepackage{pdflscape,latexsym,bm,array,caption,textcomp}

\usepackage{thmtools}
\usepackage{thm-restate}

\usepackage{hyperref}

\usepackage{cleveref}

\usepackage[square,sort,comma,numbers]{natbib}
\usepackage{url} 
\usepackage{hyperref}
\usepackage{tikz}
\usetikzlibrary{patterns}
\usetikzlibrary{decorations.pathreplacing}

\theoremstyle{plain}
\newtheorem{hypothesis}{Hypothesis}
\newtheorem{clm}{Claim}[section]

\theoremstyle{definition}
\newtheorem{definition}{Definition}[section]
\theoremstyle{definition}

\theoremstyle{definition}
\newtheorem{problem}{Problem}[section]

\theoremstyle{plain}
\newtheorem{theorem}{Theorem}[section]
\theoremstyle{plain}
\newtheorem{lemma}{Lemma}[section]
\theoremstyle{plain}

\theoremstyle{plain}
\newtheorem{proposition}{Proposition}[section]
\theoremstyle{plain}

\theoremstyle{plain}

\theoremstyle{remark}
\newtheorem{remark}{Remark}[section]

\theoremstyle{definition}

\AtBeginDocument{}




\newcommand{\W}{\mathcal{W}} 
\newcommand{\N}{\mathcal{N}}
 \newcommand{\calI}{\mathcal{I}}

\newcommand{\poly}{\text{poly}}



\newcommand{\NP}{\tt{NP}}


\newcommand{\dks}{D$k$S}

\newcommand{\mathbbm}[1]{\text{\usefont{U}{bbm}{m}{n}#1}}

\usepackage{csquotes}
\MakeOuterQuote{"}

\newcommand*{\email}[1]{%
	\normalsize\href{mailto:#1}{#1}\par
}

\begin{document}
\title{Hardness of Approximation of (Multi-)LCS over Small Alphabet}  

\author{Amey Bhangale\footnote{University of California Riverside, USA. Email: \email{ameyb@ucr.edu}}\and 
  Diptarka Chakraborty\footnote{National University of Singapore, Singapore. Author is supported in part by NUS ODPRT Grant, WBS No. R-252-000-A94-133. Email: \email{diptarka@comp.nus.edu.sg}} 
  \and Rajendra Kumar\footnote{IIT Kanpur, India and National University of Singapore. Author is supported in part
by the National Research Foundation Singapore under its AI Singapore
Programme [Award Number: AISG-RP-2018-005]. Email: \email{rjndr2503@gmail.com}}}

\pagenumbering{gobble}
\maketitle

\begin{abstract}
The problem of finding longest common subsequence (LCS) is one of the fundamental problems in computer science, which finds application in fields such as computational biology, text processing, information retrieval, data compression etc. It is well known that (decision version of) the problem of finding the length of a LCS of an arbitrary number of input sequences (which we refer to as Multi-LCS problem) is NP-complete. Jiang and Li [SICOMP'95] showed that if Max-Clique is hard to approximate within a factor of $s$ then Multi-LCS is also hard to approximate within a factor of $\Theta(s)$. By the NP-hardness of the problem of approximating Max-Clique by Zuckerman [ToC'07], for any constant $\delta>0$, the length of a LCS of arbitrary number of input sequences of length $n$ each, cannot be approximated within an $n^{1-\delta}$-factor in polynomial time unless {\tt{P}}$=${\NP}. However, the reduction of Jiang and Li assumes the alphabet size to be $\Omega(n)$. So far no hardness result is known for the problem of approximating Multi-LCS over sub-linear sized alphabet. On the other hand, it is easy to get $1/|\Sigma|$-factor approximation for strings of alphabet $\Sigma$.

In this paper, we make a significant progress towards proving hardness of approximation over small alphabet by showing a polynomial-time reduction from the well-studied \emph{densest $k$-subgraph} problem with {\em perfect completeness} to approximating Multi-LCS over alphabet of size $poly(n/k)$. As a consequence, from the known hardness result of densest $k$-subgraph problem (e.g. [Manurangsi, STOC'17]) we get that no polynomial-time algorithm can give an $n^{-o(1)}$-factor approximation of Multi-LCS over an alphabet of size $n^{o(1)}$, unless the Exponential Time Hypothesis is false.

\end{abstract}

\newpage
\pagenumbering{arabic}


 \section{Introduction}
 Finding \emph{longest common subsequence} (LCS) of a given set of strings over some alphabet is one of the fundamental problems of computer science. The computational problem of finding (the length of a) LCS has been intensively studied for the last five decades (see~\cite{Hirs83} and the references therein). This problem finds many applications in the fields of computational biology, data compression, pattern recognition, text processing and others. LCS is often considered among two strings, and in that case it is considered to be one of the classic string similarity measures (see~\cite{BHR00}). The general case, when the number of input strings is unrestricted, is also very interesting and well-studied. To avoid any confusion we refer to this general version of the LCS problem as \emph{Multi-LCS} problem. One of the major applications of Multi-LCS is to find similar regions of a set of DNA sequences. Multi-LCS is also a special case of the multiple sequence alignment and consensus subsequence discovery problem (e.g.~\cite{Pev92}). Interested readers may refer to the chapter entitled "Multi String Comparison-the Holy Grail" of the book~\cite{Gus97} for a comprehensive study on this topic. Other applications of Multi-LCS include text processing, syntactic pattern recognition~\cite{LF78} etc.
 
 Using a basic dynamic programming algorithm~\cite{WF74} we can find a LCS between two strings of length $n$ in quadratic time. However the general version, i.e., the Multi-LCS problem is known to be {\NP}-hard~\cite{Maier78} even for the binary alphabet. This problem remains {\NP}-hard even with certain restrictions on input strings (e.g.~\cite{BBJTV12}). For $m$ input strings a generalization of the basic dynamic programming algorithm finds LCS in time $O(m n^m)$. Recently, Abboud, Backurs and Williams~\cite{ABW15} showed that an $O(n^{m-\varepsilon})$ time (for any $\varepsilon >0$) algorithm for this problem would refute the Strong Exponential Time Hypothesis (SETH) even for alphabet of size $O(m)$.
 
 Due to the computational hardness of exact computation of a LCS, an interesting problem is what is the best approximation factor that we can achieve within a reasonable time bound. A $c$-approximate solution (for some $0<c \leq 1$) of a LCS is a common subsequence of length at least $c\cdot |LCS|$, where $|LCS|$ denotes the length of a LCS. For the Multi-LCS problem, Jiang and Li~\cite{jiang1995approximation} showed that if Max-Clique is hard to approximate within a factor of $s$ then Multi-LCS is also hard to approximate within a factor of $\Theta(s)$. By the NP-hardness of the problem of approximating Max-Clique by Zuckerman~\cite{Z07}, for any constant $\delta>0$, the length of a LCS of arbitrary number of input sequences of length $n$ each, cannot be approximated within an $n^{1-\delta}$-factor in polynomial time unless {\tt{P}}$=${\NP}. However, the result of  Jiang and Li~\cite{jiang1995approximation} is only true for alphabets of size $\Omega(n)$. For smaller alphabets (even for size sublinear in $n$) we do not know any such hardness result. Jiang and Li~\cite{jiang1995approximation} conjectured that Multi-LCS for even binary alphabet is {\tt{MAX-SNP}}-hard (see~\cite{PY91} for the definition of {\tt{MAX-SNP}}-hardness). To the best of our knowledge no progress has been done so far on the direction of showing any conditional hardness for smaller alphabets. On the other hand, it is very easy to get a $1/|\Sigma|$-approximation algorithm for the Multi-LCS problem over any alphabet $\Sigma$. The algorithm just outputs the best subsequence among the subsequences of the same symbol.
 
 In this paper, we make a significant progress towards showing hardness of approximation of Multi-LCS by refuting the existence of a polynomial time constant factor approximation algorithm under the Exponential Time Hypothesis (ETH). 
 
 
 \begin{theorem}
 \label{thm:main-ETH-hardness}
There exists a growing function $f(n) = n^{o(1)}$ such that assuming ETH, there is no polynomial time $\frac{1}{f(n)}$-factor approximation algorithm for the Multi-LCS problem over $n^{o(1)}$-sized alphabet.
 \end{theorem}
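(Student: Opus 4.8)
The plan is to prove Theorem~\ref{thm:main-ETH-hardness} by composing two reductions. The technical heart, which I would isolate as a separate lemma, is a polynomial-time gap-preserving reduction from the densest $k$-subgraph problem \emph{with perfect completeness} to gap-Multi-LCS over a \emph{small} alphabet: given an $n$-vertex graph $G$ and a parameter $k$, produce a family of strings over an alphabet $\Sigma$ with $|\Sigma| = \poly(n/k)$, together with a length bound $L$, such that (completeness) if $G$ contains a $k$-clique then the strings admit a common subsequence of length $L$, and (soundness) if every $k$-vertex subgraph of $G$ spans at most a $\gamma$-fraction of its possible edges then every common subsequence has length at most $\gamma' L$, where $1/\gamma'$ is at least a fixed positive power of $1/\gamma$. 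Granting such a lemma, Theorem~\ref{thm:main-ETH-hardness} follows by instantiating it with Manurangsi's ETH-hardness of densest $k$-subgraph: in the relevant hard instances one may take $n/k = n^{o(1)}$ and $1/\gamma \ge n^{1/(\log\log n)^{O(1)}}$, a growing quantity that is still $n^{o(1)}$; then $|\Sigma| = n^{o(1)}$, the ruled-out approximation factor $1/\gamma'$ is a growing $n^{o(1)}$ function, and the conclusion holds under ETH because the composed reduction runs in polynomial time.

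For the core reduction I would follow the classical ``select one vertex per block'' template for LCS hardness, but redesign it so that the combinatorial bookkeeping is carried by \emph{positions within the strings} rather than by distinct symbols --- this is exactly the mechanism that lets the alphabet drop from $\Omega(n)$, as in Jiang--Li, down to $\poly(n/k)$. Concretely, I would partition $V(G)$ into $k$ blocks $B_1, \dots, B_k$ of size $\Theta(n/k)$ each (taking these to be the $k$ parts of the hard instance, so that the clique in the completeness case is a transversal), and build the strings so that an ``honest'' common subsequence proceeds in $k$ rounds, committing in round $i$ to a single vertex $v_i \in B_i$. The only symbols that range over a set of size $\Theta(n/k)$ are those naming a vertex inside its block; round markers, separators, and the edge-checking machinery use only a constant (or polylogarithmic) number of symbols whose meaning is disambiguated by \emph{where} they occur. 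To force $v_1, \dots, v_k$ to be pairwise adjacent I would include, for each pair of blocks, a gadget string encoding the corresponding bipartite subgraph of $G$, placed so that a common subsequence can clear it only via a pair $(v_i, v_j) \in E(G)$; calibrating $L$ to be $\Theta(k^2)$ makes the honest common subsequence effectively \emph{count verified edges}, so that a $k$-clique gives completeness and, contrapositively, any long common subsequence yields a $k$-vertex set witnessing many edges.

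I expect the main obstacle to be the soundness analysis. Because control symbols are deliberately reused at many positions to keep $|\Sigma|$ small, I must rule out ``cheating'' common subsequences that exploit this reuse: subsequences that fail to commit to exactly one vertex per block, that interleave material from different rounds or different pair-gadgets, or that match control symbols in unintended alignments. The remedy is to make the positional layout rigid --- via carefully chosen block lengths, padding, and the multiplicity and ordering of the gadget strings --- and then to argue, by a somewhat delicate combinatorial analysis, that any common subsequence of length $\gamma' L$ can be decoded into a choice of at most $k$ vertices of $G$ accounting for $\Omega(\gamma' k^2)$ edges, which contradicts $\gamma$-soundness once $\gamma'$ exceeds $\gamma$ by the relevant slack. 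A secondary but necessary step is the parameter bookkeeping: verifying that Manurangsi's regime simultaneously provides $k = n^{1-o(1)}$ and a sub-polynomial $\gamma$, and that after the reduction the string length, alphabet size, and gap combine to yield the claimed growing $f(n) = n^{o(1)}$ under ETH.
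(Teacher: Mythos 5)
Your outer architecture is exactly the paper's: isolate a polynomial-time gap reduction from densest $k$-subgraph \emph{with perfect completeness} to gap-Multi-LCS over a $\poly(n/k)$-sized alphabet, then instantiate it with Manurangsi's ETH-hardness in the regime $n/k = n^{o(1)}$, $1/\gamma = n^{\Theta(1/(\log\log n)^{c_0})}$. That composition and the parameter bookkeeping are fine. The gap is entirely in the core reduction, and it is a genuine missing idea rather than a deferred routine verification. Your construction relies on control symbols ``whose meaning is disambiguated by where they occur'' and on making ``the positional layout rigid'' via block lengths and padding. But LCS has no mechanism that respects position: a common subsequence may skip arbitrary stretches of every string and match a reused control symbol from round $i$ in one string against its occurrence in round $j$ of another. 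Padding and ordering cannot prevent this, and with only $\poly(n/k)$ symbols the number of such spurious matches is enormous. This is precisely the failure mode the paper flags (``one might get a large common subsequence by matching parts of strings corresponding to different symbols''), and the paper's entire technical contribution is the tool that defeats it: replace each vertex-symbol of the Jiang--Li instance by a string $S_i$ from a family with pairwise $|LCS(S_i,S_j)| \le \alpha m$ (long-distance synchronization strings, or random strings), so that any misaligned matching of a $\beta m$-length piece must consume $\beta/\alpha$ distinct blocks and hence cannot contribute much in aggregate. Your proposal contains no analogue of this quantitative anti-matching guarantee, so the ``delicate combinatorial analysis'' you invoke has nothing to run on.

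Two further points. First, even with the synchronization-string tool, the paper cannot show that a long LCS in the small-alphabet instance decodes to a clique (or to an exact one-vertex-per-block selection); it explicitly does not know whether that converse holds. It only extracts a $k$-vertex subgraph of density $\Omega(\gamma^2)$, via a case analysis separating ``sparse'' misalignments (which waste blocks and can be charged to the LCS length) from ``dense'' honest alignments (which certify many neighbors). Your soundness target --- decoding a length-$\gamma' L$ subsequence into at most $k$ vertices spanning $\Omega(\gamma' k^2)$ edges --- is of the right shape, but you should expect only a density conclusion, not a vertex-selection conclusion, and you need an explicit accounting of how much LCS length a cheating alignment can buy. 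Second, your completeness step assumes the hard D$k$S instance is $k$-partite with the planted clique a transversal of parts of size $\Theta(n/k)$; Manurangsi's theorem as used here makes no such promise, so you would need either a color-coding/splitter preprocessing step or a different construction (the paper avoids this entirely by building one string pair per vertex of $G$, with no partition into blocks).
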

 
 This rules out {\em any} efficient poly-logarithmic factor approximation algorithm for the Multi-LCS problem over any $n^{o(1)}$-sized alphabet. We show the above theorem by providing a polynomial time reduction from the well-studied \emph{densest $k$-subgraph} problem with {\em perfect completeness} and its gap version $\gamma$-{\dks} (for the definition see Section~\ref{sec:prelims}).
 
 \begin{theorem}
 \label{thm:main-reduction}
Let $\frac{k}{n} = \frac{\beta(n)}{\gamma(n)}$ for $\beta<\gamma\leq 1$. If there is no polynomial time algorithm that solves $(\gamma^2/4)$-\dks($k,n$), then there is no polynomial time algorithm that solves $2\gamma$-approximate Multi-LCS problem over some alphabet of size $O(\frac{1}{\beta^6})$.
\end{theorem}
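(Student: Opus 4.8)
The plan is to build, in polynomial time, from a graph $G=([n],E)$ a family $\mathcal{S}$ of strings over an alphabet $\Sigma$ with $|\Sigma|=O(1/\beta^6)$ together with a threshold $\ell$, such that (i) if $G$ has a clique on $k$ vertices then $\mathrm{LCS}(\mathcal{S})\ge\ell$, and (ii) if every $k$-vertex subset of $G$ spans at most a $\gamma^2/4$ fraction of its $\binom{k}{2}$ possible edges then $\mathrm{LCS}(\mathcal{S})<2\gamma\,\ell$. Granting this, one decides $(\gamma^2/4)$-\dks($k,n$) by running the assumed polynomial-time $2\gamma$-approximation algorithm for Multi-LCS on $\mathcal{S}$ and answering \textsc{yes} iff it outputs a common subsequence of length $\ge 2\gamma\,\ell$: in the \textsc{yes} case it returns length $\ge 2\gamma\cdot\mathrm{LCS}(\mathcal{S})\ge 2\gamma\,\ell$, and in the \textsc{no} case length $\le\mathrm{LCS}(\mathcal{S})<2\gamma\,\ell$. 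This contradicts the assumed hardness of $(\gamma^2/4)$-\dks($k,n$), so the whole task is to construct the map with properties (i) and (ii).

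For the construction I would follow the template of Jiang and Li~\cite{jiang1995approximation}, in which any common subsequence of the produced strings is forced to ``spell out'' an increasing, pairwise-adjacent list of vertices of $G$ --- a clique --- of total length proportional to that clique's size, so that $\mathrm{LCS}$ tracks the clique number of $G$. The new ingredient is to avoid the cost $|\Sigma|=\Theta(n)$ of giving each vertex its own symbol: instead encode each vertex $v$ by a short \emph{codeword} $\mathrm{cw}(v)$ over a tiny alphabet (for concreteness its base-$r$ representation with $r=\poly(1/\beta)$, so $|\mathrm{cw}(v)|=O(\log_r n)$), with every symbol of a codeword followed by $\Theta(1/\gamma)$ fresh ``filler'' symbols and consecutive codewords separated by fresh separator symbols; call $c$ the resulting common per-vertex length. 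The family $\mathcal{S}$ then consists of a few ``selector'' strings whose long common subsequences are exactly the padded concatenations $\mathrm{cw}(v_1)\cdots\mathrm{cw}(v_k)$ over increasing tuples $v_1<\dots<v_k$, together with, for every non-edge $\{u,v\}$ of $G$, a ``consistency'' string obtained from the canonical ordering by transposing the blocks $\mathrm{cw}(u)$ and $\mathrm{cw}(v)$, which rules out any common subsequence containing both $\mathrm{cw}(u)$ and $\mathrm{cw}(v)$ in canonical order. The fillers are there to make the encoding robust: they ensure that the length of any common subsequence is within a $(1\pm o(\gamma))$ factor of $c$ times the number of vertices it cleanly encodes, up to an $o(\gamma k)$ additive slack; a routine accounting of the radix/filler/separator overheads (each a small power of $1/\beta$) gives $|\Sigma|=O(1/\beta^6)$. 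Set $\ell:=ck$.

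Completeness is immediate: if $C=\{v_1<\dots<v_k\}$ is a clique, the padded concatenation of $\mathrm{cw}(v_1),\dots,\mathrm{cw}(v_k)$ is a subsequence of every selector string by construction and of every non-edge consistency string because the transposed pair $\{u,v\}$ is not contained in the clique $C$; its length is $ck=\ell$. For soundness, let $w$ be a common subsequence with $|w|\ge 2\gamma\,\ell$. Matching $w$ against the selectors shows that, after discarding the $o(\gamma k)$ symbols lost to codeword-boundary slippage, $w$ lies inside a padded concatenation of \emph{whole} codewords of a set $T$ of vertices with $|w|\le c\,(|T|+o(\gamma k))$, and matching $w$ against the consistency strings shows $T$ contains no non-edge, i.e.\ $T$ is a clique of $G$. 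By the \textsc{no} promise $G$ has no clique on more than $k$ vertices, so $|T|\le k$; padding $T$ to a $k$-subset and applying the promise again gives $\binom{|T|}{2}\le(\gamma^2/4)\binom{k}{2}$, hence $|T|\le(1+o(1))\tfrac{\gamma}{2}k$. Therefore $|w|\le c\big((1+o(1))\tfrac{\gamma}{2}k+o(\gamma k)\big)<2\gamma ck=2\gamma\,\ell$, contradicting $|w|\ge2\gamma\,\ell$. Note it is precisely the \emph{quadratic} relation ``clique size $\approx\sqrt{\text{edge density}}\cdot k$'' that converts the $\gamma^2/4$ density gap of \dks\ into the $2\gamma$ multiplicative gap for Multi-LCS.

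I expect the main obstacle to be the soundness analysis of the codeword encoding, not completeness or the counting step. Over a small alphabet a common subsequence can freely interleave fragments of many different codewords, thereby ``addressing'' vertices far more cheaply than by reading whole codewords, which would destroy the length/clique-size correspondence and let the \textsc{no} case produce long common subsequences. The codewords, fillers and separators must be engineered --- for instance so that a proper prefix of one codeword followed by a proper suffix of another is never a long common subsequence of the selectors, and so that the consistency strings cannot be defeated by such hybrids --- so that any common subsequence which is not, up to an $o(\gamma)$ fraction of its length, a clean list of whole codewords is provably too short to matter. Doing this robustly while keeping $|\Sigma|=O(1/\beta^6)$ is the crux of the argument.
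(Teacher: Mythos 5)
Your high-level framing (gap reduction, completeness via a clique, soundness via the density promise) matches the paper, but the core of your soundness argument contains a gap that is precisely the difficulty the paper is organized around. You assert that after discarding an $o(\gamma k)$ fraction of symbols, a long common subsequence $w$ must ``lie inside a padded concatenation of whole codewords'' of a vertex set $T$, and that the non-edge consistency strings then force $T$ to be a clique. Over a small alphabet this decoding step is exactly what is not known to hold: a common subsequence can be assembled from interleaved fragments of many different vertex gadgets, and the paper explicitly states that it does not know how to show that a large LCS of the small-alphabet instance pulls back to a large LCS of the original instance (i.e.\ to a clique). The paper's actual soundness proof never recovers a clique. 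Instead it partitions the LCS into per-vertex blocks $Z_i$, classifies how each block can be aligned in $Y_i$ and $Y'_i$ into ``sparse'' and ``dense'' cases (Claims 3.1--3.4), prunes the sparse regions (Claim 3.5), and concludes only that the surviving vertices induce a subgraph of density at least $(\gamma/2)^2$ (Claim 3.6), which is then played against the $\gamma^2/4$ density promise. That weaker conclusion is the entire reason the theorem is a reduction from densest $k$-subgraph with perfect completeness rather than from gap-Clique; your version, if it worked, would prove the stronger gap-Clique reduction that the authors could not obtain. Flagging the issue in your final paragraph as ``the crux'' does not discharge it.

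A second, more concrete problem is your choice of vertex encoding. Base-$r$ representations of length $O(\log_r n)$ fail the property the argument needs: two codewords differing in one digit share a common subsequence of all but one symbol, so fragments of wrong codewords match almost for free, and fillers or separators drawn from a bounded alphabet cannot prevent this without either blowing up $|\Sigma|$ or being equally matchable across gadgets. The paper instead takes the per-vertex gadgets $S_1,\dots,S_n$ to be length-$m$ strings with $|LCS(S_i,S_j)|\le\alpha m$ for all $i\ne j$ (random strings, or deterministically long-distance synchronization strings), and the $O(\beta^{-6})$ alphabet bound comes from setting $\alpha=\beta^2/8$ in the $O(\alpha^{-3})$-sized alphabet of that construction --- not from a radix/filler accounting. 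Even with such gadgets, the quantitative control one gets is only that a wrongly-placed block $Z_i$ of length $\ge\beta m$ consumes at least $\beta/\alpha$ gadget slots, which is what drives the sparse/dense dichotomy and ultimately yields density rather than clique-ness.
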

 The above reduction together with the ETH-based hardness result for the densest $k$-subgraph problem given by Manurangsi~\cite{manurangsi2017almost} implies Theorem~\ref{thm:main-ETH-hardness}. We refer to Appendix~\ref{sec:relatedworks} for the previous works related to the LCS problem and the densest $k$-subgraph problem.

 \subsection{Techniques}
 
Our reduction starts with the reduction from the Max-Clique problem to Multi-LCS given by~\cite{jiang1995approximation}. Given a graph $G$ on $n$ vertices the reduction outputs a Multi-LCS instance $\calI$ over an alphabet $\{a_1, a_2, \ldots, a_n\}$ of size $n$ with $2n$ strings. The reduction has a guarantee that the maximum LCS size of $\calI$ is {\em equal} to the size of the maximum clique in $G$.

A natural way to reduce the alphabet size is to replace each symbol $a_i$ in a string with a string $S_i \in \Sigma^m$ over a smaller alphabet $\Sigma$. Let us denote this new instance by $\calI'$.  The hope is that the only way to get a large LCS in $\calI'$ is to match the corresponding strings whenever the respective symbols in $\calI$ are matched. But this wishful thinking is not true when the alphabet size is much smaller than the original alphabet size as one might get a large common subsequence by matching parts of strings $S_i, S_j$ corresponding to the different symbols $a_i, a_j$ in the original strings.

We get away with this issue by using a special collection of strings $\{S_1, S_2, \ldots, S_n\}$ with the guarantee that for every pair $i\neq j$, LCS$(S_i, S_j)$ is much smaller than $m$. We can construct such a set deterministically by using the known deterministic construction of  the so called {\em long-distance synchronization strings}~\cite{HS18,CHLSW19}. There is also a much simpler randomized construction (see \Cref{thm:random-LCS}). It is easy to see that if the original strings have a LCS of size $t$, then the new Multi-LCS instance $\calI'$ over alphabet $\Sigma$ has an LCS of size at least $tm$. 

The interesting direction is to prove the converse i.e., if the LCS of $\calI'$ is large then the LCS of $\calI$ is also large. We do not know if this is true in general. So we rely on the starting problem of Max-Clique from which the instance $\calI$ (and hence $\calI'$) was created. We show that if $\calI'$ has large LCS, then we can find a large subgraph of $G$ which has a non trivial density (instead of finding a large clique). Thus, the reduction relies on hardness of approximation of the D$k$S problem with {\em perfect completeness}. Then we use the result of Manurangsi~\cite{manurangsi2017almost} which shows that given a graph $G$ with a guarantee that there is a clique of size $k$, there is no polynomial time algorithm which finds a subgraph of $G$ of size $k$ with density at least $\gamma(n)$ for some $\gamma(n) = o(n)$, assuming the ETH.

 \subsection{Related works}
 \label{sec:relatedworks}
 \subsubsection{Results on LCS problem}
 Finding LCS between two strings is an important problem in computer science. Wagner and Fischer~\cite{WF74} gave a quadratic time algorithm, which is in fact prototypical to dynamic programming. The running time was later improved to (slightly) sub-quadratic, more specifically $O(\frac{n^2\log \log n}{\log^2 n})$~\cite{MP80, G16}. Abboud, Backurs and Williams~\cite{ABW15} showed that a truly sub-quadratic algorithm ($O(n^{2-\varepsilon})$ for some $\varepsilon>0$) would imply a $2^{(1-\delta)n}$ time algorithm for CNF-satisfiability, contradicting the Strong Exponential Time Hypothesis (SETH). They in fact showed that for $m$ input strings an algorithm with running time $O(n^{m-\varepsilon})$ would refute SETH. Abboud et al.~\cite{AHWW16} later further strengthened the barrier result by showing that even shaving an arbitrarily large polylog factor from $n^2$ would have the plausible, but hard-to-prove, consequence that {\tt{NEXP}} does not have non-uniform ${NC}^1$ circuits. In case of approximation algorithm for LCS over arbitrarily large alphabets a simple sampling based technique achieves $O(n^{-x})$-approximation in $O(n^{2-2x})$ time. Very recently, an $O(n^{-0.497956})$ factor approximation (breaking $O(\sqrt{n})$ barrier) linear time algorithm is provided by Hajiaghayi et al.~\cite{HSSS19}. For binary alphabets another very recent result breaks $1/2$-approximation factor barrier in subquadratic time~\cite{RS20}. (Note, $1/|\Sigma|$-approximation over any alphabet $\Sigma$ is trivial.) The only hardness (or barrier) results for approximating LCS in subquadratic time are presented in~\cite{AB17, AR18}.
 
 For the general case (which we also refer as Multi-LCS), when the number of input strings is unrestricted, the decision version of the problem is known to be {\NP}-complete~\cite{Maier78} even for the binary alphabet. The problem remains {\NP}-complete even with further restriction like bounded \emph{run-length} on input strings~\cite{BBJTV12}. As cited earlier, Jiang and Li~\cite{jiang1995approximation} (along with the result of Zuckerman~\cite{Z07}) showed that for every constant $\delta>0$, there is no polynomial time algorithm that achieves $n^{1-\delta}$-approximation factor, unless {\tt{P}}$=${\NP}. One interesting aspect of the reduction in~\cite{jiang1995approximation} is that in any input string any particular symbol appears at most twice. It is worth mentioning that if we restrict ourselves to the input strings where a symbol appears exactly once, then we can find a LCS in polynomial time. The algorithm is just an extension of the dynamic programming algorithm that finds a longest increasing subsequence of an input sequence. It is also not difficult to show that the decision version of the Multi-LCS problem with the above restriction on the input strings can be solved even in non-deterministic logarithmic space. To see this, consider a LCS as a certificate. Then the verification algorithm makes single pass on the certificate, and checks whether every two consecutive symbols in the certificate appears in the same order in all the input strings. Clearly, the above verification algorithm uses only logarithmic space. Since we know that each symbol appears exactly once in a string, the above verification algorithm correctly decides whether the given certificate is a valid LCS or not.
 

 \subsubsection{Hardness results related to densest $k$-subgraph problem}
 Our starting point of the reduction is the hardness of approximating the densest $k$-subgraph problem. In the densest $k$-subgraph problem (D$k$S), we are given a graph $G(V,E)$ and an integer $1\leq k\leq |V|$. The task is to find a subgraph of $G$ of size $k$ with maximum density. Various approximation algorithms are known for D$k$S \cite{KP93, FPK01}, and the current best known is by \cite{BCCFV10} which gives $n^{1/4+\varepsilon}$-approximation algorithm for any constant $\varepsilon>0$.
 
 A special case of D$k$S is when it is guaranteed that $G$ has a clique of size $k$ and the task is to find a subgraph of size\footnote{Note, here \emph{size} of a subgraph refers to the number of vertices present in that subgraph.} $k$ with density at least $\gamma$ for $0<\gamma\leq 1$. In this {\em perfect completeness} case, Feige and Seltser \cite{FS97} gave an algorithm which finds a $k$ sized subgraph with density $(1-\varepsilon)$ in time $n^{O((1+\log \frac{n}{k})/\varepsilon)}$.

 There are several inapproximability results known for D$k$S based on worst-case assumptions. Khot~\cite{Khot06} ruled out a PTAS assuming $\NP\nsubseteq {\tt{BPTIME}}$ $(2^{n^\varepsilon})$ for some constant $\varepsilon>0$. Raghavendra and
 Steurer~\cite{raghavendra2010graph} showed that DkS is hard to approximate to within any constant ratio assuming the Unique Games Conjecture where the constraint graph satisfies a small set expansion property. 
 
 Assuming the Exponential Time Hypothesis, Braverman et al. \cite{BKRW17}, showed that for some constant $\varepsilon>0$, there is no polynomial time algorithm which when given a graph with a $k$-clique finds a $k$ sized subgraph with density $(1-\varepsilon)$. This result is significantly improved by Manurangsi~\cite{manurangsi2017almost} in which he showed that assuming ETH, no polynomial time algorithm can distinguish between the cases when $G$ has a clique of size $k$ and when every $k$ sized subgraph has density at most $n^{-1/(\log \log n)^c}$ for some constant $c>0$.


\section{Preliminaries}
\label{sec:prelims}
\paragraph*{Notations:} We use $[n]$ to denote the set $\{1,2,\cdots,n\}$. For any string $S$ we use $|S|$ to denote its length. By abuse of notation, for any set $V$ we also use the notation $|V|$ to denote the size of $V$. For any string $S$ of length $n$ and two indices $i,j \in [n]$, $S[i,j]$ denotes the substring of $S$ that starts at index $i$ and ends at index $j$. We use $\alpha(n), \beta(n), \gamma(n)$ to denote that $\alpha, \beta, \gamma$ are allowed to depend on $n$.

\subsection{Longest Common Subsequence}
Given $m$ sequences $S_1,\hdots,S_m$ of length $n$ over an alphabet $\Sigma$, the longest common subsequence is the longest sequence $S$ such that $\forall i\in [m], S$ is a subsequence of $S_i$.

We will refer to the computational problem of finding or deciding the length of LCS as a Multi-LCS problem. In this paper, we consider the decision variant of this problem: Given an integer $\ell\leq n$, we have to decide whether LCS has a length greater than equal to $\ell$, or less than $\ell$. For the approximation, we consider the following gap-version of this problem.

\begin{problem}
For any $0<\kappa<1$, the $\kappa$-approximate Multi-LCS problem is defined as: Given sequences $S_1,\hdots,S_m$ of length $n$ over an alphabet $\Sigma$ and an integer $\ell$, the goal is to distinguish between the following two cases
\begin{itemize}
	\item YES instance: A LCS of $S_1,\hdots,S_m$ has length greater than or equal to $\ell$.
	\item NO instance: A LCS of $S_1,\hdots,S_m$ has length less than $\kappa \cdot \ell$.
\end{itemize}
\end{problem}

We use the following definition of alignment.

\begin{definition}[Alignment]
Given two strings $S_1$ and $S_2$ of lengths $n$ and $m$ respectively, alignment $\sigma$ is a function from $[n]$ to $[m]\cup \{*\}$ which satisfies $\forall i\in[n],\text{if } \sigma(i)\neq * \text{ then } S_1[i]=S_2[\sigma(i)] $ and for any $i$ and $j$ if $\sigma(i)\neq *, \sigma(j)\neq *$ then for $i>j$, $\sigma(i)>\sigma(j)$.
\end{definition}
For an alignment $\sigma$ between two strings $S_1$ and $S_2$ we say $\sigma$ \emph{aligns} some subsequence $T_1=S_1[i_1]S_1[i_2]\cdots S_1[i_{\ell_1}]$ of $S_1$ with some subsequence $T_2=S_2[j_1]S_2[j_2]\cdots S_2[j_{\ell_2}]$ of $S_2$ if and only if for all $p \in [\ell_1]$, $\sigma(i_p) \in \{j_1,j_2,\cdots,j_{\ell_2}\}$.

\subsection{Exponential Time Hypothesis}
The Exponential Time Hypothesis (ETH) was introduced by Impagliazzo and Paturi~\cite{impagliazzo2001complexity}. It refutes the possibility of getting much faster algorithm to decide satisfiability of a $3$-CNF formula (also referred as $3$-SAT problem) than that by the trivial brute force method.

\begin{hypothesis}[ETH]
There is no $2^{o(n)}$ time algorithm for the $3$-SAT problem over $n$ variables.
\end{hypothesis}

%
%
%
%

\subsection{Densest $k$-Subgraph problem and related hardness results}
For any graph, the density is defined as the ratio of the number of edges present in it and the number of edges in any complete graph of the same size. So given a graph $G=(V,E)$, the density of $G$ is  $\frac{2|E|}{|V|^2-|V|} $.
	
The Densest $k$-Subgraph (\dks) problem is the following: Given a graph $G$ on $n$ vertices and a positive integer $k\leq n$, the goal is to find a subgraph of $G$ with $k$ vertices which has maximum density.

In this paper we will consider the following gap-version of densest $k$-subgraph, which in the literature is sometimes referred as densest $k$-subgraph with {\em perfect completeness}.

\begin{problem}
For any $\gamma \leq 1$, $\gamma$-\dks($k,n$) is defined as: Given a graph $G$ on $n$ vertices and a positive integer $k\leq n$, the goal is to distinguish between the following two cases
\begin{itemize}
	\item YES instance: There exists a clique of size $k$.
	\item NO instance: All subgraphs of size $k$ have density at most $\gamma$.
\end{itemize}
\end{problem}
We say that an algorithm solves $\gamma$-\dks($k,n$) if given any input it can distinguish whether the input is a YES instance or a NO instance. If the algorithm is randomized then it should succeed with probability at least $2/3$.

In this paper we use the following hardness result by Manurangsi~\cite{manurangsi2017almost}.
\begin{theorem}[\cite{manurangsi2017almost}]
\label{thm:ethclique}
There exists a constant $c_0 >0$ such that assuming the Exponential Time Hypothesis, for all constants $\varepsilon>0$, there is no polynomial time algorithm for $\gamma$-\dks($k,n$) where $\gamma = n^{-O\left(\frac{1}{(\log \log n)^{c_0}}\right)}$ and $\frac{k}{n} \in \left[n^{-\varepsilon}, n^{-\Omega\left(\frac{1}{\log \log n}\right)}\right]$. 
\end{theorem}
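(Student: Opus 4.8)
The plan is a three‑stage reduction. First, ETH gives a $2^{\Omega(n)}$ lower bound for deciding satisfiability of a $3$‑CNF on $n$ variables. Second, a near‑linear‑size two‑query PCP turns this into hardness of a gap version of Label Cover / $2$‑CSP with perfect completeness and small soundness, at essentially the same input size. Third — and this is the heart — a ``birthday repetition'' gadget converts such a Label Cover instance into a \dks{} instance with perfect completeness, where the density achievable in the NO case is governed by the PCP soundness raised to the number of constraints a random pair of ``partial assignments'' jointly touches. The almost‑polynomial ratio $\gamma=n^{-O(1/(\log\log n)^{c_0})}$ and the stated window for $k/n$ then come out of optimizing the single gadget parameter (the size $\ell$ of the partial‑assignment domains) against the budget ``the whole reduction must run in time polynomial in its output and $2^{o(n)}$ in its input''.

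For the PCP stage I would start from a $3$‑SAT instance $\varphi$ with $|\varphi|=n$ and apply a size‑$n^{1+o(1)}$ PCP (a Dinur‑style construction with gap amplification, which preserves perfect completeness and keeps the alphabet a (possibly large) constant, optionally followed by a soundness‑reduction / parallel‑repetition step, or a Moshkovitz--Raz‑type two‑query PCP) to obtain in polynomial time a bipartite Label Cover instance $\Psi$ with $N=n^{1+o(1)}$ constraints, alphabet of size $R$, perfect completeness, and soundness $\delta$, where $R$ and $1/\delta$ are as small as the chosen construction permits while keeping $N$ near‑linear. Any algorithm solving $\delta$‑Gap‑Label‑Cover on $\Psi$ in time $T$ decides $\varphi$ in time $T+n^{1+o(1)}$, so under ETH no such algorithm runs in time $2^{N^{1-o(1)}}$.

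For the gadget stage I would build the graph $G$ whose vertices are all pairs $(S,\alpha)$ with $S$ an $\ell$‑element subset of the variable set of $\Psi$ and $\alpha$ a labeling of $S$ satisfying every constraint of $\Psi$ internal to $S$, and join $(S,\alpha)\sim(T,\beta)$ iff $\alpha,\beta$ agree on $S\cap T$ and satisfy every constraint of $\Psi$ between $S$ and $T$. Set $k=\binom{N}{\ell}$, i.e.\ one vertex per subset. Completeness: a satisfying assignment $\sigma$ of $\Psi$ yields the $k$‑clique $\{(S,\sigma|_S):|S|=\ell\}$, so some $k$‑subgraph has density $1$. Soundness: if $\mathrm{val}(\Psi)\le\delta$, then for any $k$‑subgraph $W$, averaging over a uniformly random pair of its vertices, the two subsets jointly span — by the birthday bound, once $\ell\gg\sqrt{N}$ — enough shared and cross constraints that agreement of $\alpha$ and $\beta$ on them would exhibit a partial assignment of value $>\delta$; hence the density of $W$ is at most roughly $\delta^{\,\ell^2/N}$. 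The graph has $n'=\binom{N}{\ell}R^{\ell}$ vertices with $\log n'=\tilde\Theta(\ell)$, and the entire reduction runs in $\mathrm{poly}(n')$ time, which is $2^{o(N)}$ provided $\ell=o(N)$. Taking $\ell=N/g$ for a slowly growing $g=g(N)$ gives gap $\gamma\approx\delta^{\,N/g^2}$ and $\log n'\approx (N/g)\cdot\mathrm{polylog}$; re‑expressing $\gamma$ in terms of $n'$ and tuning $g$ together with the amount of PCP soundness‑reduction places $\gamma$ at $n'^{-O(1/(\log\log n')^{c_0})}$, while $k/n'=R^{-\ell}$ can be set anywhere in $[(n')^{-\varepsilon},(n')^{-\Omega(1/\log\log n')}]$ by the same choice of $\ell$. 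Composing with the ETH lower bound for $\delta$‑Gap‑Label‑Cover finishes the argument.

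The main obstacle is the quantitative soundness analysis of the birthday gadget together with the parameter balancing. One must prove the NO‑case density is genuinely a power of $\delta$ raised to the number of cross constraints — a single naive birthday step only gives density $1-\Omega(1)$, which merely reproduces the earlier constant‑factor hardness of Braverman et al.~\cite{BKRW17} — and one must simultaneously (a) keep $n'$ quasi‑polynomial in $n$ so that a $\mathrm{poly}(n')$‑time \dks{} algorithm still beats $2^{o(n)}$ for $3$‑SAT, and (b) ensure the resulting $\gamma$, measured in the graph's own size, is $n'^{-1/(\log\log n')^{O(1)}}$ rather than a weaker $1/\mathrm{polylog}(n')$ or a stronger‑but‑unattainable $n'^{-\Omega(1)}$. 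This forces careful use of the best available near‑linear‑size PCP — its exact size‑versus‑soundness‑versus‑alphabet tradeoff is precisely what produces the $(\log\log n)^{c_0}$ in the exponent — sharp concentration bounds for the number of constraints between two random $\ell$‑subsets, and attention to the exact $k/n$ window. Verifying that each step is a deterministic reduction running in time polynomial in its output is then routine, since the gadget simply enumerates all $\ell$‑subsets and their locally consistent labelings.
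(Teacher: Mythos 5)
First, a point of reference: the paper you are reading does not prove this statement at all. Theorem~\ref{thm:ethclique} is imported verbatim from Manurangsi~\cite{manurangsi2017almost} and used as a black box, so the only proof to compare against is the one in that paper. Your outline does capture the correct skeleton of that argument: ETH, a near-linear-size PCP with perfect completeness yielding a gap Label Cover/CSP instance, a graph whose vertices are locally consistent partial assignments to $\ell$-subsets with consistency edges, the clique $\{(S,\sigma|_S)\}$ witnessing perfect completeness, and the bookkeeping $\log n' \approx \ell$ with $\ell = N/g$ for a slowly growing $g$ so that the reduction runs in time $2^{o(N)}$. That much is right.

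The genuine gap is exactly the step you yourself label ``the main obstacle'': it is not a loose end to be tightened but the entire content of the theorem, and the argument you sketch for it does not work. The claim that every $k$-subgraph $W$ of a NO instance has density at most $\delta^{\ell^2/N}$ does not follow from averaging over a uniformly random pair of vertices of $W$. First, adjacency of $(S,\alpha)$ and $(T,\beta)$ only requires the merged assignment to satisfy the constraints \emph{inside} $S\cup T$; even when $\mathrm{val}(\Psi)\le\delta$, sub-instances induced by large vertex subsets can be perfectly satisfiable, so individual edges are plentiful and the bound must come from a global counting/decoding argument, not from ``a consistent pair would exhibit a partial assignment of value $>\delta$.'' Second, an adversarial $W$ need not resemble independent random $\ell$-subsets with random labels: it can concentrate on a highly correlated family of subsets, or place many distinct labelings on each of a few subsets (pairs sharing a subset are non-edges, but that only caps the density at roughly $1-1/M$ for $M$ distinct subsets, nowhere near $\delta^{\ell^2/N}$). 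Ruling these configurations out requires extracting from any dense $k$-subgraph a single global assignment of value exceeding $\delta$ (e.g.\ by a majority-style decoding over the labels that $W$ assigns to each variable), which is precisely the main technical lemma of \cite{manurangsi2017almost}; a naive birthday-repetition soundness analysis only reproduces the $1$ vs.\ $1-\varepsilon$ hardness of \cite{BKRW17}, as you note. A secondary gap: you assert that $k/n' = R^{-\ell}$ can be placed anywhere in $\left[(n')^{-\varepsilon}, (n')^{-\Omega(1/\log\log n')}\right]$ ``by the same choice of $\ell$,'' but $\ell$ is already pinned down by the density calculation; with $\ell=N/g$ one only gets $k/n'=(n')^{-\Theta(1/\log g)}$, so covering the full stated range of $k/n$ requires an additional padding or parameter-shifting argument that your proposal does not supply.
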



\section{Reduction}
\label{sec:reduction}
In this section we provide a reduction from the densest $k$-subgraph problem to the problem of approximating Multi-LCS and prove Theorem~\ref{thm:main-reduction}. Note that, Theorem~\ref{thm:main-reduction} and Theorem~\ref{thm:ethclique} together immediately imply Theorem~\ref{thm:main-ETH-hardness} by plugging $\gamma(n) =  n^{-O\left(\frac{1}{(\log \log n)^{c_0}}\right)}, \beta(n) = \gamma(n)^2$.

\begin{remark}
If we want to get the hardness of Multi-LCS for a constant sized alphabet using \Cref{thm:main-reduction}  then $k$ must be $\Omega(n)$. However, when $k=\Omega(n)$ \Cref{thm:ethclique} does not imply any hardness result. In fact, when $k=\Omega(n)$, there is a polynomial time algorithm for $(1-\varepsilon)$-D$k$S$(k,n)$ for any constant $\varepsilon>0$~\cite{FS97}. Therefore our reduction will not give any hardness for constant sized alphabet. However, if one can improve Theorem~\ref{thm:ethclique} for $k/n = 1/\poly(\log n)$ and $\gamma(n) = 1/\poly(\log n)$, then our main reduction in \Cref{thm:main-reduction} will imply Multi-LCS hardness for $\poly(\log n)$ sized alphabet!
\end{remark}

Our reduction involves two steps: First, we use the reduction from the Max-Clique problem to the Multi-LCS problem over large alphabet given in~\cite{jiang1995approximation}. Next we perform alphabet reduction by replacing each character by a "short" string over a small-sized alphabet.

\paragraph*{Revisiting the reduction from Max-Clique to Multi-LCS. }We first recall the reduction from~\cite{jiang1995approximation}. We are given a graph $G=(V,E)$ on $n$ vertices and an integer $k\leq n$. Fix an arbitrary labeling on the vertices of $V$ as $v_1,\hdots,v_n$. For every vertex $v_i$, partition its neighbors into two subsets: $\N_{<}(v_i)$ contains all the neighboring vertices $v_j$ with $j < i$; and $\N_{>}(v_i)$ contains all the neighboring vertices $v_j$ with $j > i$. 

Consider an alphabet $\Sigma$ containing a separate symbol for each vertex. We use $v_i$ to denote both the vertex and its corresponding symbol in $\Sigma$. Now for each vertex $v_i\in V$, construct the following two strings $X_i$ and $X'_i$   
\[ X_i=v_1 \hdots v_{i-1} v_{i+1} \hdots v_n v_i v_{i_r} \hdots v_{i_s}\text{ and } X'_i= v_{i_p} \hdots v_{i_q}v_i v_1 \hdots v_{i-1} v_{i+1} \hdots v_n  \]
where $\N_{>}(v_i)=\{v_{i_r}, \cdots, v_{i_s}\}$ with $i_r < \cdots < i_s$, and $\N_{<}(v_i)=\{v_{i_p}, \cdots, v_{i_q}\}$ with $i_p < \cdots < i_q$. The following proposition is immediate from the above construction.
\begin{proposition}[\cite{jiang1995approximation}]
\label{prop:complete}
If there is a clique of size $c$ in $G$, then there is a common subsequence of $X_1,\cdots,X_n$, $X'_1,\cdots,X'_n$ of length $c$.
\end{proposition}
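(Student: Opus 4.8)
The plan is to exhibit an explicit common subsequence of the $2n$ strings $X_1,\dots,X_n,X'_1,\dots,X'_n$ whose length equals the size of the given clique. Let $C \subseteq V$ be a clique of size $c$, and write $C = \{v_{j_1}, v_{j_2}, \dots, v_{j_c}\}$ with $j_1 < j_2 < \dots < j_c$. The natural candidate is the string $T = v_{j_1} v_{j_2} \cdots v_{j_c}$, i.e. the clique vertices listed in increasing order of their labels. First I would verify that $T$ is a subsequence of every $X_i$; then, by the near-symmetry of the construction, that it is a subsequence of every $X'_i$.

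For the $X_i$ strings, I would split into two cases according to whether $v_i \in C$ or not. If $v_i \notin C$, then every symbol $v_{j_t}$ of $T$ satisfies $j_t \neq i$, so all of $v_{j_1},\dots,v_{j_c}$ appear — in this very order — inside the prefix $v_1 \cdots v_{i-1} v_{i+1} \cdots v_n$ of $X_i$, which is just $1,2,\dots,n$ with $i$ deleted; hence $T$ is a subsequence. If $v_i \in C$, say $v_i = v_{j_t}$, then the vertices $v_{j_1},\dots,v_{j_{t-1}}$ (all with label $< i$) occur in order in the prefix block $v_1 \cdots v_{i-1}$; the symbol $v_i = v_{j_t}$ occurs as the distinguished middle copy of $X_i$ right after the block $v_1 \cdots v_{i-1} v_{i+1} \cdots v_n$; and the remaining vertices $v_{j_{t+1}},\dots,v_{j_c}$ all lie in $\N_{>}(v_i)$ because $C$ is a clique, so each is some $v_{i_\ell}$ with $i_r < \dots < i_s$, and since the suffix $v_{i_r}\cdots v_{i_s}$ of $X_i$ lists $\N_{>}(v_i)$ in increasing label order, these appear there in the correct order. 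Concatenating the three matched pieces shows $T$ is a subsequence of $X_i$. The argument for $X'_i$ is the mirror image: when $v_i \notin C$ use the suffix $v_1 \cdots v_{i-1} v_{i+1} \cdots v_n$ of $X'_i$; when $v_i = v_{j_t} \in C$, match $v_{j_1},\dots,v_{j_{t-1}}$ (labels $<i$, but these must in fact be read from the prefix $v_{i_p}\cdots v_{i_q}$ listing $\N_{<}(v_i)$ in increasing order — which contains exactly those $v_{j_\ell}$ since $C$ is a clique), then the middle copy $v_i$, then $v_{j_{t+1}},\dots,v_{j_c}$ from the trailing block $v_1\cdots v_{i-1}v_{i+1}\cdots v_n$.

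There is essentially no "main obstacle" here — the statement is genuinely a direct unwinding of the construction, which is why the paper calls it immediate. The one point that warrants care is the sub-case $v_i \in C$, where I must use the clique property twice: once to know the smaller-labeled clique vertices sit in $\N_{<}(v_i)$ (needed for $X'_i$) and once to know the larger-labeled ones sit in $\N_{>}(v_i)$ (needed for $X_i$), together with the fact that these neighbor-blocks are written in increasing label order so that the relative order of the matched symbols is preserved. Once those order checks are in place, $T$ is a common subsequence of all $2n$ strings and $|T| = c$, which is exactly the claimed conclusion.
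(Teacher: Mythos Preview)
Your proof is correct and is exactly the argument the paper has in mind: exhibit the clique vertices in increasing label order and verify, by the case split $v_i\in C$ versus $v_i\notin C$, that this string embeds in every $X_i$ and $X'_i$ using the clique property to place the smaller/larger neighbors in the $\N_{<}(v_i)$ and $\N_{>}(v_i)$ blocks. The paper treats this as immediate and only records the one-line justification you spelled out in detail.
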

The converse has also been shown in~\cite{jiang1995approximation}.
\begin{proposition}[\cite{jiang1995approximation}]
\label{prop:sound}
For any common subsequence $S$ of $X_1,\cdots,X_n,X'_1,\cdots,X'_n$, all the $v_i$'s present in $S$ form a clique in $G$.
\end{proposition}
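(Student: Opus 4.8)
The plan is to exploit a simple but crucial structural feature of the construction: each symbol $v_i$ occurs \emph{exactly once} in its own string $X_i$ (it is omitted from the length-$(n-1)$ prefix $v_1\cdots v_{i-1}v_{i+1}\cdots v_n$, and $\N_{>}(v_i)$ consists only of vertices of index $>i$), and likewise exactly once in $X'_i$. This pins down how any common subsequence can sit inside $X_i$ and $X'_i$.

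\textbf{Step 1: no repeated symbols.} First I would show that $S$ uses every symbol at most once. If some $v_a$ appeared at least twice in $S$, then since $S$ is a subsequence of $X_a$, the symbol $v_a$ would appear at least twice in $X_a$, contradicting the observation above. Hence we may write $S = v_{a_1}v_{a_2}\cdots v_{a_t}$ with the $v_{a_p}$ pairwise distinct.

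\textbf{Step 2: locating the occurrences.} Next, fix any symbol $v_a$ occurring in $S$ and consider an embedding of $S$ into $X_a$. Because $v_a$ occurs exactly once in $X_a$, namely at the slot immediately after the length-$(n-1)$ prefix, the copy of $v_a$ in $S$ must be matched to that slot. Therefore every symbol of $S$ lying to the right of $v_a$ is matched into the suffix $v_{i_r}\cdots v_{i_s}$ of $X_a$, i.e.\ it lies in $\N_{>}(v_a)$. (Symmetrically, embedding $S$ into $X'_a$ forces every symbol of $S$ to the left of $v_a$ to lie in $\N_{<}(v_a)$, although I will not need this direction.)

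\textbf{Step 3: conclude a clique.} Finally, take any two symbols $v_{a_p},v_{a_q}$ of $S$ with $p<q$. Applying Step 2 to $v_{a_p}$ and noting that $v_{a_q}$ lies to the right of $v_{a_p}$ in $S$, we get $v_{a_q}\in\N_{>}(v_{a_p})$, and in particular $\{v_{a_p},v_{a_q}\}\in E$. Since this holds for every pair, $\{v_{a_1},\dots,v_{a_t}\}$ is a clique in $G$, which is exactly the claim. The argument is almost entirely bookkeeping; the one place to be careful is Step 2 — arguing rigorously that the single occurrence of $v_a$ in $X_a$ forces the embedding, and correctly reading off that the part of $X_a$ after that occurrence is precisely $\N_{>}(v_a)$ (and, for the $X'_a$ statement, that the part before $v_a$ is precisely $\N_{<}(v_a)$).
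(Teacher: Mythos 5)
Your proof is correct and takes essentially the same route as the paper's one-sentence sketch (that any common subsequence has strictly increasing indices with an edge between every pair), made rigorous by pinning the unique occurrence of $v_a$ in $X_a$ and reading off that every later symbol of $S$ must land in $\N_{>}(v_a)$. The only (harmless) difference is that you derive the clique property from the strings $X_i$ alone, never needing the $X'_i$'s.
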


The proofs of these propositions follow from the facts that any common subsequence is of the form $v_{i_1}, v_{i_2}, \ldots, v_{i_t}$ where $i_1<i_2<\ldots<i_t$ and that there must be an edge between $v_{i_j}$ and $v_{i_{j'}}$ for $1\leq j<j'\leq t$.

\paragraph*{Reducing the size of the alphabet. }For some parameter $\alpha(n)<1$, let $\{S_1,\hdots,S_n\}$ be a set of strings of length $m$ over some alphabet $\Sigma'$ such that: for all $i \ne j$
	$ |LCS(S_i,S_j)|\le \alpha m.$
We will fix the value of $m$ and $|\Sigma'|$ later. The following theorem (Theorem 1 of~\cite{KLM04}) shows that if we pick strings from $\Sigma'^{m}$ uniformly at random then for $|\Sigma'|=O(1/\alpha^2)$, with high probability the sampled strings will satisfy the above desired property.
\begin{theorem}[\cite{KLM04}]
\label{thm:random-LCS}
For every $\varepsilon >0$ there exists $c>0$ such that for large enough sized alphabet $\Sigma'$ for any $m$ if two strings $S_1,S_2$ are picked uniformly at random from $\Sigma'^m$ then 
$$Pr\Big[\Big||LCS(S_1,S_2)|-\frac{2m}{\sqrt{|\Sigma'|}}\Big| \ge \varepsilon\frac{2m}{\sqrt{|\Sigma'|}} \Big] \le e^{-cm/\sqrt{|\Sigma'|}}.$$
\end{theorem}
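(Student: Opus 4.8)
Write $q=|\Sigma'|$ and $L=|\mathrm{LCS}(S_1,S_2)|$. The plan is to separate the statement into two independent pieces: (i) a sharp estimate of the mean, $\E[L]=(1+o_q(1))\cdot\frac{2m}{\sqrt q}$ with the error term tending to $0$ as $q\to\infty$ uniformly in $m$ (this is exactly where the constant $2$ comes from), and (ii) a concentration bound $\Pr\!\left[\,|L-\E[L]|\ge\delta\cdot\frac{2m}{\sqrt q}\,\right]\le e^{-c_\delta m/\sqrt q}$ valid for every fixed $\delta>0$ once $m/\sqrt q$ is at least an absolute constant (outside that range the asserted tail bound is essentially vacuous, and in the paper's application $m\gg q$ anyway). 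Given (i) and (ii): for a target $\varepsilon$, first pick $q$ large enough that the error in (i) is below $\varepsilon/2$, then invoke (ii) with $\delta=\varepsilon/2$ and set $c=c_{\varepsilon/2}$; the triangle inequality finishes it.

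For (ii) I would appeal to Talagrand's concentration inequality for certifiable Lipschitz functions rather than to the Azuma--McDiarmid bounded-differences inequality. Viewing $L$ as a function of the $2m$ i.i.d.\ symbols, it is $1$-Lipschitz (altering one symbol changes $L$ by at most $1$) and is self-certifiable: a common subsequence of length $\ell$ is witnessed by exposing just $2\ell$ coordinates, so a value of $\ell$ always has a certificate of size $O(\ell)$. Talagrand's inequality then gives a tail of the form $\Pr[|L-\E[L]|\ge t]\le C\exp\!\big(-\Omega(t^2/(\E[L]+t))\big)$. Since $\E[L]=\Theta(m/\sqrt q)$ and we take $t=\Theta(m/\sqrt q)$, the exponent is $\Theta(m/\sqrt q)$ --- precisely the claimed rate. (The plain bounded-differences bound would only yield $e^{-\Omega(m/q)}$, which is far too weak for large $q$; exploiting the small certificate size is what buys the extra $\sqrt q$.) The same machinery controls $|\E[L]-\mathrm{median}(L)|=O(\sqrt{\E[L]})=O(\sqrt m/q^{1/4})=o(m/\sqrt q)$, so one may freely pass between mean and median.

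For (i), the easy direction is a first-moment bound: a length-$\ell$ common subsequence corresponds to increasing index tuples $i_1<\cdots<i_\ell$ in $S_1$ and $j_1<\cdots<j_\ell$ in $S_2$ with $S_1[i_p]=S_2[j_p]$ for all $p$, so $\Pr[L\ge\ell]\le\binom{m}{\ell}^2 q^{-\ell}$, which is $o(1)$ as soon as $\ell\ge(e+o(1))\,m/\sqrt q$. This already pins the order of magnitude but gives the constant $e$, not $2$. To sharpen to $2$ in both directions I would pass to the random set of matched pairs $M=\{(i,j)\in[m]^2:S_1[i]=S_2[j]\}$, whose longest chain in the coordinatewise order equals $L$; note that $|M|$ concentrates around $m^2/q$; and compare the longest chain of $M$ with the longest increasing subsequence of $\approx m^2/q$ points spread over $[m]^2$, which is $(2+o(1))\sqrt{m^2/q}=(2+o(1))\,m/\sqrt q$ by the Logan--Shepp--Vershik--Kerov / Baik--Deift--Johansson limit-shape theory for longest increasing subsequences. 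The delicate point --- and the main obstacle of the whole argument --- is that $M$ is not a uniform/product point set: the events $(i,j)\in M$ and $(i,j')\in M$ are positively correlated through the common symbol $S_1[i]$. One has to handle this, e.g.\ by conditioning on the (sharply concentrated) symbol-frequency profiles of $S_1$ and $S_2$ and bounding the longest chain row-block by row-block, or by invoking the Kiwi--Loebl--Matou\v{s}ek reduction relating LCS of random words directly to longest increasing subsequences of random words. Once this transfer of the LIS constant through the correlations is in hand, together with (ii) the theorem assembles as described in the first paragraph.
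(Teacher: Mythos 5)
This statement is not proved in the paper at all: it is quoted verbatim (as Theorem~1 of~\cite{KLM04}) and used as a black box, so there is no ``paper proof'' to compare against. Judged on its own terms, your sketch is an accurate reconstruction of how the result is actually established in~\cite{KLM04}: the split into a mean estimate (where the constant $2$ enters via the longest-increasing-subsequence limit shape) plus a concentration step, the observation that plain bounded differences only yields $e^{-\Omega(m/q)}$ and that Talagrand's inequality for $1$-Lipschitz, $f$-certifiable functions is what recovers the rate $e^{-\Omega(m/\sqrt{q})}$, and the identification of the correlations in the matched-pair set $M$ as the real difficulty are all correct and are exactly the ingredients of the original argument.

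The one substantive gap is that your resolution of that difficulty is essentially ``invoke the Kiwi--Loebl--Matou\v{s}ek reduction,'' which is the content of the theorem you are trying to prove; as a self-contained proof the hardest step (transferring the LIS constant $2$ through the non-product point process $M$, e.g.\ via their comparison with a Poissonized grid and a block decomposition) is asserted rather than carried out. A second, minor point to make explicit is uniformity in $m$: the statement holds ``for any $m$,'' so the $o_q(1)$ error in $\E[L]$ must be uniform in $m$; the upper bound $\E[L]/m\le\gamma_q$ follows for every $m$ from superadditivity of $m\mapsto\E[L]$ (Fekete), while the matching lower bound at finite $m$ needs an extra argument (for $m\ll\sqrt{q}$ the tail bound is anyway nearly vacuous, as you note). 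Neither issue affects the present paper, which only consumes the theorem as stated.
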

Now by suitably choosing $\varepsilon,m$ the following lemma directly follows from a union bound over every pair of  $n$ chosen strings.
\begin{lemma}
\label{lem:edit-codeword}
For any $\alpha\in (0,1)$, and $n\in \mathbb{N}$ there exists an alphabet $\Sigma'$ of size $O(\alpha^{-2})$ such that for any $m\ge c \alpha^{-1}\log n$ (for some suitably chosen constant $c>0$), if we choose a set of strings $S_1,\cdots,S_n$ uniformly at random from $\Sigma'^m$ then with probability at least $1-1/n$ for each $i\ne j$, $|LCS(S_i,S_j)| \le \alpha m$.
\end{lemma}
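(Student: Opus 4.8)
The plan is to apply Theorem~\ref{thm:random-LCS} pairwise with a fixed constant $\varepsilon$ and then take a union bound over all $\binom{n}{2}$ pairs of chosen strings.

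First I would fix $\varepsilon = 1$ and let $c_{\mathrm{KLM}}>0$ and a constant $\Sigma'_0$ be the constant and the ``large enough alphabet'' threshold guaranteed by Theorem~\ref{thm:random-LCS} for this $\varepsilon$. Then set $|\Sigma'| := \max\{\ceil{16/\alpha^2},\, \Sigma'_0\}$. Since $\alpha \in (0,1)$ we have $\alpha^{-2}\ge 1$, so $|\Sigma'| = O(\alpha^{-2})$, and $|\Sigma'|$ is large enough for Theorem~\ref{thm:random-LCS} to apply. With this choice $(1+\varepsilon)\cdot \frac{2}{\sqrt{|\Sigma'|}} = \frac{4}{\sqrt{|\Sigma'|}} \le \alpha$, so for any fixed pair $i\ne j$, Theorem~\ref{thm:random-LCS} gives
\begin{align*}
\Pr\big[|LCS(S_i,S_j)| > \alpha m\big]
&\le \Pr\Big[|LCS(S_i,S_j)| > (1+\varepsilon)\tfrac{2m}{\sqrt{|\Sigma'|}}\Big] \\
&\le \Pr\Big[\big|\,|LCS(S_i,S_j)| - \tfrac{2m}{\sqrt{|\Sigma'|}}\,\big| \ge \varepsilon\tfrac{2m}{\sqrt{|\Sigma'|}}\Big]
\le e^{-c_{\mathrm{KLM}} m/\sqrt{|\Sigma'|}}.
\end{align*}

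Next I would union bound over the (fewer than $n^2$) unordered pairs $\{i,j\}$: the probability that some pair violates $|LCS(S_i,S_j)| \le \alpha m$ is at most $n^2 e^{-c_{\mathrm{KLM}} m/\sqrt{|\Sigma'|}}$. Because $\ceil{16/\alpha^2}\le 17/\alpha^2$ and $\Sigma'_0 \le \Sigma'_0/\alpha^2$, we have $\sqrt{|\Sigma'|} \le c_1/\alpha$ with $c_1 := \sqrt{\max\{17,\Sigma'_0\}}$ an absolute constant; hence if $m \ge \frac{3c_1}{c_{\mathrm{KLM}}}\cdot \alpha^{-1}\ln n$ then $c_{\mathrm{KLM}} m/\sqrt{|\Sigma'|}\ge 3\ln n$ and the failure probability is at most $n^2 e^{-3\ln n} = 1/n$. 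Taking $c := 3c_1/c_{\mathrm{KLM}}$ (up to the $\ln$-versus-$\log$ constant, still absolute) yields the lemma.

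There is no real obstacle here; the statement is essentially Theorem~\ref{thm:random-LCS} repackaged through a union bound. The only points needing a little care are: fixing $\varepsilon$ up front, since both $c_{\mathrm{KLM}}$ and the minimal alphabet size depend on it; verifying that clamping $|\Sigma'|$ from below by the constant $\Sigma'_0$ keeps it $O(\alpha^{-2})$, which works because $\alpha^{-2}\ge 1$; and tracking constants so the final bound on $m$ really has the form $c\,\alpha^{-1}\log n$ with $c$ independent of $\alpha$ and $n$.
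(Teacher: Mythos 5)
Your proof is correct and is exactly the argument the paper intends: the paper gives no explicit proof of Lemma~\ref{lem:edit-codeword}, stating only that it ``directly follows from a union bound'' over pairs after suitably choosing $\varepsilon$ and $m$ in Theorem~\ref{thm:random-LCS}. You have simply filled in the details (fixing $\varepsilon$ first so the constants are well-defined, clamping the alphabet size, and tracking the constant in the bound on $m$), all of which check out.
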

The above lemma gives us a randomized reduction. However we can deterministically find such a collection (with a slight loss in the parameters) using the known construction of \emph{synchronization strings}. The proof of the following Lemma is deferred to Appendix~\ref{sec:derand}.
\begin{restatable}{lemma}{missing}
\label{lem:edit-codeword-det}
For any $\alpha\in (0,1)$, and $n\in \mathbb{N}$ there exists an alphabet $\Sigma'$ of size $O(\alpha^{-3})$ such that for any $m > 2 \alpha^{-2}\log n$, there is a deterministic construction of a set of strings $S_1,\cdots,S_n \in \Sigma'^m$ such that for each $i\ne j$, $|LCS(S_i,S_j)| \le \alpha m$. Moreover, all the strings can be generated in time $O(\alpha^{-2}nm)$.
\end{restatable}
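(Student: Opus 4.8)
The plan is to build the whole family $S_1,\dots,S_n$ by cutting one long string into $n$ equal consecutive blocks, where the long string is an appropriately parameterised \emph{long-distance synchronization string}. Concretely, I would invoke the known deterministic constructions of such strings~\cite{HS18,CHLSW19}: for every $\varepsilon\in(0,1)$ and every length $N$ there is an alphabet $\Sigma'$ with $|\Sigma'|=O(\varepsilon^{-3})$ and a string $T$ of length $N$ over $\Sigma'$, computable deterministically in time $O(\varepsilon^{-2}N)$, together with a threshold $\tau=\tau(\varepsilon,N)$ such that any two \emph{disjoint} sub-intervals $I,J\subseteq[N]$ with $|I|,|J|\ge\tau$ satisfy $\mathrm{ED}(T[I],T[J])\ge(1-\varepsilon)(|I|+|J|)$, where $\mathrm{ED}$ is the edit distance with unit-cost insertions and deletions only and $T[I]$ is the substring of $T$ indexed by $I$. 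I would then take $\varepsilon=\alpha$, $N=nm$, and set $S_i:=T[(i-1)m+1,\,im]$ for $i\in[n]$.

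For correctness I would use the elementary identity $\mathrm{ED}(A,B)=|A|+|B|-2\,|LCS(A,B)|$, valid for the insertion/deletion metric. For any $i\ne j$ the index intervals $I_i=[(i-1)m+1,im]$ and $I_j=[(j-1)m+1,jm]$ are disjoint and both of size $m$, so provided $m\ge\tau$ the long-distance property yields $\mathrm{ED}(S_i,S_j)\ge(1-\alpha)\cdot 2m$, i.e.\ $|LCS(S_i,S_j)|\le\alpha m$, which is precisely the conclusion. Two points deserve emphasis. First, it is essential to use the \emph{long-distance} variant rather than an ordinary $\varepsilon$-synchronization string: the latter's defining property only constrains \emph{adjacent} sub-intervals $T[a,b]$, $T[b{+}1,c]$, whereas the pair $(S_1,S_n)$ and all intermediate pairs can be arbitrarily far apart; the single long-distance inequality handles every pair $i\ne j$ uniformly, including the adjacent case $j=i+1$ (the intervals $I_i,I_j$ are still disjoint there). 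Second, one must check that the hypothesis $m>2\alpha^{-2}\log n$ forces $m\ge\tau$; since in the cited constructions $\tau$ is of order $\alpha^{-2}\log N$ up to absolute constants and $\log N=\log(nm)$, this is immediate in the regime where the lemma is applied (there $1/\alpha\le n^{o(1)}$, hence $\log N=(1+o(1))\log n$), and can be arranged in general by a constant-factor tweak such as running the construction with $\varepsilon=c\alpha$ for a suitable constant $c$.

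The remaining bounds are read off at once: the alphabet has size $|\Sigma'|=O(\alpha^{-3})$, and generating $T$ of length $N=nm$ and slicing it costs $O(\alpha^{-2}N)=O(\alpha^{-2}nm)$. I expect the main obstacle to be exactly this last, essentially bookkeeping, step: reconciling the threshold $\tau$, the alphabet size, and the running time guaranteed by the available long-distance synchronization-string constructions with the precise figures $O(\alpha^{-3})$, $m>2\alpha^{-2}\log n$, and $O(\alpha^{-2}nm)$ asserted in the statement. The conceptual content — that one "self-incompatible" long string yields many mutually incompatible short strings — is immediate once the right construction is quoted.
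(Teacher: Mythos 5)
Your proposal is correct and takes essentially the same route as the paper: both invoke the deterministic long-distance $\varepsilon$-synchronization string construction (Theorem~\ref{thm:sync-const}, with $\varepsilon=\Theta(\alpha)$, alphabet size $O(\varepsilon^{-3})$, and near-linear construction time), slice it into length-$m$ blocks to obtain $S_1,\dots,S_n$, and convert the pairwise guarantee on disjoint substrings into $|LCS(S_i,S_j)|\le\alpha m$ via the insertion/deletion edit-distance--LCS identity. The only cosmetic differences are that the paper takes alternate blocks of a length-$2nm$ string with $\varepsilon=\alpha/2$ rather than your consecutive blocks with $\varepsilon=\alpha$, and it likewise buries the $\log(nm)$-versus-$\log n$ threshold bookkeeping in the choice of constants.
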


\begin{remark}
One advantage of using the randomized construction is the alphabet size (as well as the length of strings); randomized construction has only a quadratic loss whereas the deterministic construction has a cubic loss in the alphabet size. However this will not matter much for the parameters we need to prove our main theorem.
\end{remark}

Now let us continue with the description of our reduction. We replace each $v_j \in \Sigma$ by the string $S_j$. After the replacement we get the following two strings $Y_i$ and $Y'_i$ respectively from $X_i$ and $X'_i$.
\[ Y_i=S_1 \hdots S_{i-1} S_{i+1} \hdots S_n S_i S_{i_r} \hdots S_{i_s}\text{ and }
 Y'_i= S_{i_p} \hdots S_{i_q}S_i S_1 \hdots S_{i-1} S_{i+1} \hdots S_n   \]
Note, $Y_i$ and $Y'_i$'s are over the alphabet $\Sigma'$. For notational convenience we use $S_{\N_{>i}}$ to denote the substring $S_{i_r} \hdots S_{i_s}$, and $S_{\N_{<i}}$ to denote the substring $S_{i_p} \hdots S_{i_q}$. From now on, for simplicity, we will refer to these $S_i$'s as \emph{blocks}. Note, due to deterministic construction of strings $S_i$'s by Lemma~\ref{lem:edit-codeword-det} our whole reduction is deterministic and polynomial time. 

It follows directly from Proposition~\ref{prop:complete} that:
\begin{lemma}[Completeness]
\label{lem:completeness}
If graph $G$ is a YES instance of $\frac{\gamma^2}{4}$-$\dks$ (with clique of size $k$), then a LCS of $Y_1,\hdots,Y_n,Y'_1,\hdots,Y'_n$ is of length at least $km$.
\end{lemma}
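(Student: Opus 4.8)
The plan is to reduce the Completeness lemma directly to Proposition~\ref{prop:complete} via the block-substitution map. First I would recall the setup: the reduction replaces each symbol $v_j \in \Sigma$ by the block $S_j \in \Sigma'^m$, turning each $X_i$ into $Y_i$ and each $X'_i$ into $Y'_i$. Since $G$ is a YES instance of $\frac{\gamma^2}{4}$-$\dks(k,n)$, there is a clique of size $k$ in $G$, say on vertices $v_{i_1}, \ldots, v_{i_k}$ with $i_1 < i_2 < \cdots < i_k$. By Proposition~\ref{prop:complete}, the sequence $v_{i_1} v_{i_2} \cdots v_{i_k}$ is a common subsequence of all $2n$ strings $X_1,\ldots,X_n,X'_1,\ldots,X'_n$.

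Next I would lift this common subsequence through the substitution. The claim is that the concatenation $S_{i_1} S_{i_2} \cdots S_{i_k}$ (a string of length $km$ over $\Sigma'$) is a common subsequence of all $2n$ strings $Y_1,\ldots,Y_n,Y'_1,\ldots,Y'_n$. To see this, fix any one of the original strings, say $X_i$, and let $\sigma$ be an alignment witnessing that $v_{i_1} \cdots v_{i_k}$ is a subsequence of $X_i$: there exist positions $p_1 < p_2 < \cdots < p_k$ in $X_i$ with $X_i[p_\ell] = v_{i_\ell}$. In $Y_i$, the $p_\ell$-th symbol of $X_i$ has been expanded into a contiguous block equal to $S_{i_\ell}$, and these blocks appear in $Y_i$ in the same left-to-right order as the symbols $X_i[p_1], \ldots, X_i[p_k]$ appear in $X_i$ — in particular, the block for position $p_\ell$ entirely precedes the block for position $p_{\ell+1}$. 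Hence we can read off $S_{i_1}$ from the first such block, then $S_{i_2}$ from the second, and so on, which exhibits $S_{i_1} \cdots S_{i_k}$ as a subsequence of $Y_i$. The identical argument applies to each $Y'_i$. Since this holds for every index $i$, the string $S_{i_1} \cdots S_{i_k}$ is a common subsequence of all $2n$ strings $Y_1,\ldots,Y_n,Y'_1,\ldots,Y'_n$, and it has length $km$, so a LCS of these strings has length at least $km$.

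I expect this lemma to be essentially routine — the only point requiring any care is the bookkeeping that the block substitution is order-preserving (each occurrence of a symbol becomes a contiguous block, and distinct occurrences become disjoint blocks in the same order), so that an alignment at the symbol level of $\calI$ lifts to an alignment at the character level of $\calI'$. There is no obstacle here analogous to the soundness direction discussed in the Techniques section: we are only using completeness, which just requires exhibiting one explicit common subsequence, and the candidate $S_{i_1}\cdots S_{i_k}$ is dictated immediately by the clique. Note also that we do not use the low-LCS property of the $S_j$'s at all in this direction — that property is only needed for the converse. So the proof amounts to: invoke Proposition~\ref{prop:complete} to get the symbol-level common subsequence of length $k$, then replace each symbol by its block to obtain a character-level common subsequence of length $km$.
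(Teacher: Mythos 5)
Your proof is correct and matches the paper's approach exactly: the paper dispatches Lemma~\ref{lem:completeness} with the single remark that it ``follows directly from Proposition~\ref{prop:complete}'', and your argument is precisely the intended one --- take the length-$k$ symbol-level common subsequence given by the clique and lift it block-by-block to the length-$km$ string $S_{i_1}\cdots S_{i_k}$. Your explicit bookkeeping about order-preservation of the substitution, and your observation that the low-LCS property of the $S_j$'s is not needed in this direction, are both accurate.
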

We devote the rest of this section to proving the soundness of our reduction.
\begin{lemma}[Soundness]
\label{lem:soundness}
Let $\alpha \in (0, 1/8)$ and $\beta = \sqrt{8\alpha}$. If graph $G$ is a NO instance of $\frac{\gamma^2}{4}$-$\dks$ (every subgraph of size $k$ has density less than $\frac{\gamma^2}{4}$), then a LCS of $Y_1,\hdots,Y_n,Y'_1,\hdots,Y'_n$ has length at most $2\beta m n$.
\end{lemma}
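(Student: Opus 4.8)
The plan is to fix an arbitrary common subsequence $S$ of $Y_1,\dots,Y_n,Y'_1,\dots,Y'_n$, write $L=|S|$, and show $L\le 2\beta mn$ by reading off from $S$ a subgraph of $G$ whose size and density are controlled by $L$; the NO instance hypothesis then caps $L$. First I would fix a canonical decomposition: since $\N_{<}(v_1)=\N_{>}(v_n)=\emptyset$ we have $Y_n=Y'_1=S_1S_2\cdots S_n$, so an alignment of $S$ into $Y_n$ cuts $S$ into consecutive pieces $S=P_1P_2\cdots P_n$ with $P_i$ a subsequence of $S_i$; put $p_i=|P_i|\le m$, so $L=\sum_i p_i$ and the goal is $\sum_i p_i\le 2\beta mn$. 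The next ingredient is that the property $|\mathrm{LCS}(S_i,S_j)|\le\alpha m$ for $i\ne j$ forces every alignment to ``essentially respect blocks'': for any of the $2n$ strings $Y$ and any alignment of $S$ into $Y$, all but $O(\alpha mn)$ characters of $S$ land in a copy of their own block $S_i$ (the one with the character inside $P_i$). Indeed, overlaying the $P$-partition of $[L]$ with the partition cut out by the fewer than $2n$ block-occurrences of $Y$ refines $[L]$ into $O(n)$ intervals, and on each interval where the two block labels disagree the corresponding substring of $S$ is simultaneously a subsequence of two distinct $S_i$'s, hence of length $\le\alpha m$.

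Next I would extract edges, paralleling Proposition~\ref{prop:sound}. Look at the alignment of $S$ into $Y_i=(S_1\cdots S_{i-1}S_{i+1}\cdots S_n)\,S_i\,S_{\N_{>i}}$: here $S_i$ occurs exactly once, after the initial run, and for $i'>i$ the block $S_{i'}$ reappears in the tail $S_{\N_{>i}}$ precisely when $(v_i,v_{i'})\in E$. Hence, as soon as a single character of $P_i$ is matched into the $S_i$-block, monotonicity forces all of $P_{i'}$ (for every $i'>i$) into the $S_i$-block or the tail, and then by the block budget above all but $O(\alpha mn)$ of $\sum_{i'>i}p_{i'}$ must be matched into $S_{i'}$-copies inside the tail, which requires $(v_i,v_{i'})\in E$. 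Thus $\sum_{i'>i,\,(v_i,v_{i'})\notin E}p_{i'}\le O(\alpha mn)$; using $Y'_i=S_{\N_{<i}}\,S_i\,(S_1\cdots S_{i-1}S_{i+1}\cdots S_n)$ symmetrically, $\sum_{i'<i,\,(v_i,v_{i'})\notin E}p_{i'}\le O(\alpha mn)$ as well. Equivalently, every such $i$ satisfies $p_i+\sum_{i'\sim i}p_{i'}\ge L-O(\alpha mn)$: the $p$-weighted closed neighbourhood of $v_i$ carries almost all the weight.

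To finish, let $A$ be the set of ``anchored'' indices $i$ — those with a character of $P_i$ matched into the $S_i$-block of both $Y_i$ and $Y'_i$ — so the previous inequality holds for every $i\in A$. Summing the $p$-weighted non-edges over $A$ gives that they total $O(\alpha mn\,L)$, so (using $\sum_{i\in A}p_i^2\le mL$) the $p$-weighted edges inside $A$ are at least $\tfrac12 L_A^2-\tfrac12 mL_A-O(\alpha mnL)$, where $L_A=\sum_{i\in A}p_i$. On the other hand, in a NO instance every $t$-vertex subgraph with $t\le k$ has at most $\tfrac{\gamma^2}{4}\binom{k}{2}$ edges (extend it to $k$ vertices), and the weights $p_i$ are capped by $m$ and hence spread over $\ge L_A/m$ vertices, so the $p$-weighted edges inside $A$ are at most $m^2e(A)< m^2\cdot\tfrac{\gamma^2}{4}\binom{k}{2}$ (one first checks $|A|\le k$, distinguishing cases by whether $|A|\lessgtr k$). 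Comparing the two bounds, solving the resulting quadratic in $L_A$, and using $k=\beta n/\gamma$ (the regime in which \Cref{thm:main-reduction} applies this lemma) yields $L_A\le\tfrac{\gamma km}{2}+O(\alpha mn)=\tfrac{\beta mn}{2}+O(\alpha mn)$; since $\alpha=\beta^2/8$ the error terms are of lower order, and adding back the mass on non-anchored blocks gives $L\le 2\beta mn$.

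The hardest step — and the reason the final bound is $\Theta(\beta mn)$ rather than the $O(\alpha mn)$ one might naively hope for — is controlling the total $p$-weight on \emph{non-anchored} blocks, i.e.\ those $P_i$ that the alignment into $Y_i$ (or into $Y'_i$) scatters entirely over foreign blocks. A single scattered $P_i$ is cheap, since its whole mass is charged against the $O(\alpha mn)$ budget of the one string $Y_i$; but I expect the real argument must show that too much total scattered mass is incompatible with $S$ being a common subsequence of \emph{all} the $Y_j$ at once — either because the scattering consumes so many block-occurrences that the ordering constraints in some $Y_j$ cannot be met, or because accommodating it forces many vertices of $G$ to have large degree, contradicting the edge bound $|E|<\gamma^2 n^2/8$ that the NO instance assumption already implies. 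Carrying that through, while tracking the additive $O(\alpha mn)$ slack across the chain of inequalities, is where the bulk of the technical effort lies.
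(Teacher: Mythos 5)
Your overall architecture matches the paper's: decompose the common subsequence via $Y_n=S_1\cdots S_n$ into per-block pieces, use the pairwise bound $|LCS(S_i,S_j)|\le\alpha m$ to argue that alignments must essentially respect blocks, read off edges of $G$ from which blocks can host $P_{i'}$ once $P_i$ is anchored at the unique occurrence of $S_i$ in $Y_i$ (resp.\ $Y'_i$), and finally contradict the NO-instance density bound. Your ``global budget'' observation (overlaying the two partitions of $S$ yields $O(n)$ intervals, each mismatched one of length $\le\alpha m$) is correct and is a clean global packaging of the block-counting that the paper does locally.

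However, there is a genuine gap, and you have located it yourself: bounding the total mass $\sum_{i\notin A}p_i$ of non-anchored blocks. Each such $P_i$ charges the budget of a \emph{different} string $Y_i$ (or $Y'_i$), so the per-string $O(\alpha mn)$ budget gives nothing better than $p_i\le m$ per index, i.e., no bound at all on the sum; without that bound the final step ``adding back the mass on non-anchored blocks gives $L\le 2\beta mn$'' does not follow. The paper closes exactly this hole by amortizing over \emph{indices} rather than strings. It first thresholds, keeping only heavy blocks $Z_i$ with $|Z_i|\ge\beta m$ (cost $\beta mn$), and then shows (Claims~\ref{clm:sparse-case1} and~\ref{clm:sparse-case2}) that whenever a heavy $Z_i$ fails to anchor at its own $S_i$ in $Y_i$ or $Y'_i$, there is an entire index interval around $i$ in which the \emph{density of heavy blocks} is at most $2\alpha/\beta$: the misplaced $Z_i$ consumes at least $\beta/2\alpha$ foreign blocks of that one string, and every other heavy block in the interval consumes at least $\beta/\alpha-1$ more, so few heavy blocks fit in the interval. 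A covering argument (Claim~\ref{clm:large-subgraph}) then shows that the union of all these sparse intervals contains at most $\frac{4\alpha}{\beta}n$ heavy blocks in total. This sparse-interval amortization --- not a per-string character budget --- is the missing idea; it is also where the threshold $\beta m$ and the relation $\beta=\sqrt{8\alpha}$ come from, and why the final bound is $\Theta(\beta mn)=\Theta(\sqrt{\alpha}\,mn)$ rather than the $O(\alpha mn)$ you note one might hope for. Your edge-extraction and density-comparison steps for the anchored set are fine in spirit (the paper does the unweighted version, via $|V_H\cap\N(v_i)|\ge|V_H|-\frac{4\alpha}{\beta}n$ and a two-case analysis on $|V_H|$ versus $k$), but the proof cannot be completed without the sparse-interval argument.
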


\subsection{Proof of Soundness}
Let $L$ be an (arbitrary) LCS of $Y_1,\cdots,Y_n,Y'_1,\cdots,Y'_n$ of size greater than $2\beta mn$. By the construction $Y_n=S_1\hdots S_n$ (since $\N_{>}(v_n)=\emptyset$). So we can partition the subsequence $L$ as $Z_1,\cdots, Z_n$ where $\forall i\in [n]$ $Z_i$ is a subsequence of $S_i$. ($Z_i$ can be an empty string). Now consider all the $Z_i$ of length at least $\beta m$, and let $\W$ denote the set of all such $Z_i$'s, i.e., $\W=\{Z_i\mid |Z_i| \ge \beta m\}$. Suppose $L_1$ is the string formed by removing all $Z_i \not \in \W$ from $L$. Clearly, $|L_1|\geq |L|-\beta mn\geq \beta mn$.

For all $i,j \in [n]$ such that $i<j$, define $C[i,j]$ as:
$C[i,j]:= \{Z_t\in \W \mid  i \le t \le j \}.$
Note, $\W=C[1,n]$. Next we show that either the size of $C[1,n]$ is small or there exists a subgraph in $G$ which has large density.

Let us consider the set of vertices $V_H:=\{v_t|Z_t \in \W\}$. So $|V_H|=|\W|\ge \frac{|L|}{m}-\beta n \geq \beta n$. If we could show that the subgraph $H$ of $G$ induced by the set of vertices $V_H$ has high density (ideally, a clique), then that will imply Lemma~\ref{lem:soundness}. 

 Now consider an (arbitrary) alignment between $L_1$ and $Y_1,\cdots,Y_n,Y'_1,\cdots,Y'_n$. Let us denote the alignment between $L_1$ and $Y_i$ ($Y'_i$) by $\sigma_i$ ($\sigma'_i$). From now on whenever we will talk about alignment we will refer to these particular alignments ($\sigma_i$ or $\sigma'_i$ depending on strings under consideration) without specifying them explicitly. Consider a $Z_t\in \W$. We say $Z_t$ is \emph{$\varepsilon$-aligned} (for some $\varepsilon\in[0,1]$) with some substring $S'$ of some $Y_i$ (or $Y'_i$) if and only if either the first or the last $\varepsilon$ fraction of symbols of $Z_t$ is aligned by the alignment $\sigma'_i$ (or $\sigma'_i$) with some subsequence of $S'$. Throughout this proof we will set $\varepsilon=1/2$. Note that, if we partition $Y_i$ into (any) two parts $Y_i^l$ and $Y_i^r$ then $Z_i$ is $1/2$-\emph{aligned} to at least one of $Y_i^l$ and $Y_i^r$, and this justifies our setting of parameter $\varepsilon$.

By following the argument of the proof of Proposition~\ref{prop:sound} given in~\cite{jiang1995approximation}, it is possible to show that if $\sigma$ aligns all $Z_t$ with some subsequence of $S_t$ in all strings $Y_i$ (and $Y'_i$), then the subgraph $H$ induced by vertices in $V_H$ has high density (actually forms a clique). Unfortunately we do not know whether all the $Z_t$'s are aligned with their corresponding $S_t$'s in all the $Y_i$'s (and $Y'_i$'s). Following are the different cases of mapping $Z_i\in \W$ with $Y_i$: 
\begin{enumerate}
\item $Z_i$ is $1/2$\emph{-aligned} with the substring $S_1\hdots S_{i-1}$ of $Y_i$.
\item  $Z_i$ is $1/2$-aligned with $S_{i+1} \hdots S_n S_i S_{\N_{>i}} $ of $Y_i$ and there exists a $j>i$ such that a symbol of $Z_j$ in $L_1$ is aligned with some symbol of $S_j$ in the substring $S_{i+1}\hdots S_nS_i$. 
\item $Z_i$ is $1/2$-aligned with the substring $S_{i+1} \hdots S_n S_i S_{\N_{>i}} $ in $Y_i$ and there exists no $j>i$ such that a symbol of $Z_j\in \W$ is aligned with some symbol of $S_j$ in the substring $S_{i+1} \hdots S_n$.
\end{enumerate}
Similarly, we will also consider the mapping with $Y'_i$'s. We will categorize first and second case as \emph{sparse case} and the third one as the \emph{dense case}. Next we analyze these cases.

\subsubsection{Sparse Case: Improper mapping leads to small LCS locally}
 Let us recall that $Y_i=S_1 \hdots S_{i-1} S_{i+1} \hdots S_n S_i S_{\N_{>i}}
	\text{ and } 
 Y'_i=S_{\N_{<i}}S_i S_1 \hdots S_{i-1} S_{i+1} \hdots S_n.$ The next two claims demonstrate that if $Z_i$ is not mapped to $S_i$ in $Y_i$ (or $Y'_i$) then there is a portion $C[j,i]$ (or $C[i,j]$) in $L_1$ such that $\frac{|C[j,i]|}{i-j}$ (or $\frac{|C[i,j]|}{j-i}$) is small, i.e., that portion of $L_1$ is "sparse" with respect to the number of $Z_t$ blocks present in it.

\begin{clm}
\label{clm:sparse-case1}
If $Z_i\in \W$ is $1/2$-aligned with the substring $S_1\hdots S_{i-1}$ of $Y_i$ (by the alignment $\sigma_i$), then there exists a $j <i$ such that $|C[j,i]|\le \frac{2\alpha}{\beta} (i-j+1)$. Similarly, if $Z_i\in \W$ is $1/2$-aligned with the substring $S_{i+1}\hdots S_{n}$ of $Y'_i$ (by the alignment $\sigma'_i$), then there exists a $j >i$ such that $|C[i,j]|\le \frac{2\alpha}{\beta} (j-i+1)$. 
\end{clm}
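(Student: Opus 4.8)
The idea: if under $\sigma_i$ the first half of $Z_i$ lands inside the prefix $S_1\cdots S_{i-1}$ of $Y_i$, it cannot stay inside a single block (its ``home'' block $S_i$ sits far to the right in $Y_i$), so it must be smeared across many consecutive blocks; this simultaneously makes the index window $[j,i]$ long and forces everything of $L_1$ lying to the left of $Z_i$ to be compressed into a short prefix of blocks, which keeps that window $\W$-poor. First I would reduce to a single scenario. Since ``$Z_i$ is $1/2$-aligned with $S_1\cdots S_{i-1}$'' means the first or the last half of $Z_i$ is $\sigma_i$-mapped into $S_1\cdots S_{i-1}$, and $S_1\cdots S_{i-1}$ is a prefix of $Y_i$ while $\sigma_i$ is monotone, in both cases the first half $H$ of $Z_i$ (which has $|H|\ge |Z_i|/2\ge \beta m/2$ symbols) is mapped into $S_1\cdots S_{i-1}$; so from now on I work with $H$.

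Next, the key geometric estimate. Let $S_a$ and $S_d$ (so $a\le d\le i-1$) be the first and last of $S_1,\dots,S_{i-1}$ that $H$ touches under $\sigma_i$. For each $\ell$ with $a\le\ell\le d$, the symbols of $H$ that $\sigma_i$ sends into $S_\ell$ form a common subsequence of $S_i$ (of which $H$ is a subsequence) and $S_\ell$; as $\ell\ne i$, its length is at most $\alpha m$ by the defining property of the $S$'s. Summing over the at most $d-a+1$ blocks involved gives $|H|\le(d-a+1)\alpha m$, hence $d-a+1\ge\frac{|H|}{\alpha m}\ge\frac{\beta}{2\alpha}$, and therefore $i-a\ge d-a+1\ge\frac{\beta}{2\alpha}$. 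Now the localization: every $Z_t\in\W$ with $t<i$ precedes $Z_i$ in $L_1$, so its $\sigma_i$-image precedes $H$'s image and hence lies in positions $\le a$, i.e.\ inside $S_1\cdots S_a$. Since $\sigma_i$ is injective on its domain, $\sum_{t<i,\,Z_t\in\W}|Z_t|\le am$, so there are at most $a/\beta$ such blocks; moreover any such block whose index exceeds $a$ cannot be matched inside its home copy $S_t$ (which, for $t<i$, occurs only at position $t>a$ in $Y_i$), so by the same LCS bound it is spread over at least $\beta/\alpha$ of the blocks $S_1,\dots,S_a$, and because these images are disjoint intervals of block-positions inside $\{1,\dots,a\}$ (sharing only endpoints) there are at most $\frac{a\alpha}{\beta-\alpha}$ of them.

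Finally I would pick the witness $j$. Taking $j=1$ gives $|C[1,i]|\le a/\beta+1$ with $i\ge a+\frac{\beta}{2\alpha}$; taking $j=a+1$, the set $C[a+1,i]$ consists of $Z_i$ together with the $\W$-blocks of index in $(a,i)$, so $|C[a+1,i]|\le 1+\frac{a\alpha}{\beta-\alpha}$, while $i-(a+1)+1=i-a\ge\frac{\beta}{2\alpha}$. Using $\beta=\sqrt{8\alpha}$ and $\alpha<\frac18$, one of these two choices satisfies $|C[j,i]|\le\frac{2\alpha}{\beta}(i-j+1)$: when $a$ is small the $j=1$ inequality $a/\beta+1\le\frac{2\alpha}{\beta}i$ holds, and when $a$ is large the $j=a+1$ estimate closes once the count of left-of-$Z_i$ $\W$-blocks is split into those matched near their home block and the displaced ones. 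The delicate point is exactly this constant juggling — forcing the density threshold $\frac{2\alpha}{\beta}$ to come out in the borderline regime, which is where the precise choice $\beta=\sqrt{8\alpha}$ enters — and it may also require using that $L_1$ is a common subsequence of the remaining strings, not only of $Y_i$. The second sentence of the claim (about $Y'_i$) follows symmetrically: reversing all strings turns the suffix $S_{i+1}\cdots S_n$ of $Y'_i$ into a prefix, the $\W$-blocks of index $>i$ now lie to the right of $Z_i$, and the same argument produces a $\W$-poor window $[i,j]$ with $j>i$.
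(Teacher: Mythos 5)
Your reduction to the first half $H$ of $Z_i$, and your core counting estimates --- $H$ must be smeared over at least $\frac{\beta}{2\alpha}$ blocks, and any $\W$-block matched entirely outside its home copy must be smeared over at least $\frac{\beta}{\alpha}$ blocks (sharing only endpoints with its neighbours) --- are exactly the paper's. The proof breaks at the choice of the witness $j$. You anchor the window at $a$, the first block of $Y_i$ touched by $H$, and offer only $j=1$ and $j=a+1$; neither is guaranteed to work. For $j=a+1$ you bound $|C[a+1,i]|\le 1+\frac{a\alpha}{\beta-\alpha}$ against a budget of $\frac{2\alpha}{\beta}(i-a)$, but the only lower bound you have is $i-a\ge\frac{\beta}{2\alpha}$, which makes the budget exactly $1$: the $+1$ for $Z_i$ already saturates it, leaving no room for a single displaced block. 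For $j=1$ you bound the home-aligned blocks of index $\le a$ by the volume count $a/\beta$, but the budget is only about $\frac{2\alpha}{\beta}a+1$, which is smaller by a factor of roughly $2\alpha$; these blocks may all be legitimately matched in their own copies, so no contradiction is available. A concrete instance: $\alpha=1/32$, $\beta=1/2$, $a=20$, $i=28$, with $Z_1,\dots,Z_4\in\W$ home-aligned, one displaced $\W$-block of index in $(20,28)$ occupying $S_5\cdots S_{20}$, and $H$ occupying $S_{20}\cdots S_{27}$. Then $|C[21,28]|=2>\frac18\cdot 8$ and $|C[1,28]|=6>\frac{29}{8}$, so both of your candidates fail, yet the claim holds (with $j=5$).

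The missing idea is the paper's choice of anchor: take $j$ to be the \emph{largest index less than $i$ such that some symbol of $Z_j$ is aligned by $\sigma_i$ inside its own block $S_j$} (and $j=0$ if none exists). By monotonicity of $\sigma_i$, everything of $L_1$ between $Z_j$ and the first half of $Z_i$ must land inside $S_j\cdots S_{i-1}$, a supply of only $i-j$ blocks; and by maximality of $j$, \emph{every} $Z_t\in C[j+1,i-1]$ is displaced and so consumes at least $\frac{\beta}{\alpha}$ of them (minus one for endpoint sharing), while $H$ consumes at least $\frac{\beta}{2\alpha}$. This yields $\frac{\beta}{2\alpha}+\bigl(\frac{\beta}{\alpha}-1\bigr)|C[j+1,i-1]|\le i-j$, and with $\frac{\beta}{\alpha}-1\ge\frac{\beta}{2\alpha}$ one gets $|C[j+1,i]|\le\frac{2\alpha}{\beta}(i-j)$, which is the claim for the witness $j+1$. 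In your example this anchor is $j=4$, giving the window $[5,28]$ that neither of your candidates finds. The symmetric statement for $Y'_i$ needs the analogous anchor (the smallest home-aligned index greater than $i$). Without this choice, the ``constant juggling'' you defer does not close.
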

\begin{proof}
Suppose $Z_i$ is $1/2$-aligned with $S_1\hdots S_{i-1}$ of $Y_i$. Let $j$ be the largest index less than $i$ such that a symbol in $Z_j$ is aligned (by $\sigma_i$) with some symbol in $S_j$ in $Y_i$ (if there does not exist such a $j$ then take $j=0$). Note, by the definition of $1/2$-alignment at least first $\beta m/2$ symbols of $Z_i$ are mapped (by $\sigma_i$) in $S_1 \hdots S_{i-1}$. Recall, the definition of $1/2$-alignment ensures the mapping of the first or the last half fraction of symbols. However in this case if $Z_i$'s last $\beta m/2$ symbols are mapped in $S_1\hdots S_{i-1}$ then the whole $Z_i$ is actually mapped in $S_1\hdots S_{i-1}$, which is even stronger than what we state. 

By the properties of strings $S_k$'s specified in Lemma~\ref{lem:edit-codeword-det}, the first $\beta m/2$ symbols of $Z_i$ require at least $\frac{\beta}{2\alpha}$ blocks from $\{S_{j}, S_{j+1}, \hdots, S_{i-1}\}$ to map completely (see Figure~\ref{fig:block-map}). 

\begin{figure}[h]
\begin{center}
\begin{tikzpicture}[yscale=0.4]

\draw (0,1) rectangle (10,2);

\draw (0,3) rectangle (10,4);

\draw (5,3) rectangle (6,4);
\node[] at (5.5,3.5) {$S_{i-1}$};

\draw (2,3) rectangle (3,4);
\node[] at (2.5,3.5) {$S_{t}$};

\draw (5,1) rectangle (6,2);
\node[] at (5.5,1.5) {$Z_i$};

\draw[dashed,color=red] (5,2)--(2,3);
\draw[dashed,color=red] (5.7,2)--(6,3);

\node[] at (-0.5,1.5) {$L_1$};

\node[] at (-0.5,3.5) {$Y_i$};

\draw [decorate,decoration={brace,amplitude=10pt},xshift=0pt,yshift=3pt, color=black,thick]
(2,4) -- (6,4) node [black,midway,yshift=0.5cm] {$\geq\frac{\beta}{2\alpha}$ blocks};

\end{tikzpicture}
\caption{$Z_i$ is $1/2$-aligned with $S_1\hdots S_{i-1}$ where $t>j$}
\label{fig:block-map}
\end{center}
\end{figure}
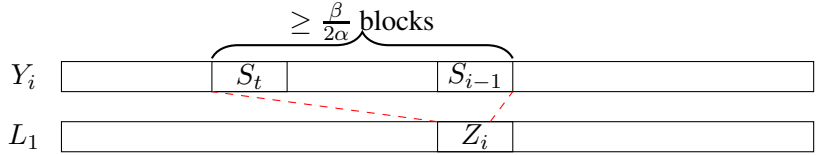

Similarly each element of $ C[j+1,i-1]$ also requires at least $\frac{\beta}{\alpha}$ blocks from $\{S_{j}, S_{j+1}, \hdots, S_{i-1}\}$. However any two $Z_p, Z_{p+1} \in C[j+1,i]$ may share a block (more specifically, the last block used for $Z_p$ and the first block used for $Z_{p+1}$) for mapping. So, we get
\begin{align*}
&\frac{\beta}{2\alpha}+(\frac{\beta}{\alpha}-1)|C[j+1,i-1]|\leq i-j \Rightarrow \frac{\beta}{2\alpha}|C[j+1,i]|\leq i-j.
\end{align*}
Note, $\frac{\beta}{\alpha}-1 \geq \frac{\beta}{2\alpha}$ as $\alpha\leq 1/8$ (recall, $\beta=\sqrt{8\alpha}$), and $C[j+1,i-1]\cup \{Z_i\}=C[j+1,i]$.


Similarly, suppose $Z_i$ is $1/2$-aligned with $S_{i+1}\hdots S_{n}$ of $Y'_i$. Let $j$ be the smallest index greater than $i$ such that a symbol of $Z_j$ is aligned (by $\sigma'_i$) with some symbol of $S_j$ in $Y'_i$ (if there does not exist any $j$ then take $j=n+1$). Using an argument similar to the above, we get
\begin{align*}
&\frac{\beta}{2\alpha}+(\frac{\beta}{\alpha}-1)|C[i+1,j-1]|\leq j-i \Rightarrow  \frac{\beta}{2\alpha}|C[i,j-1]|\leq j-i.
\end{align*}
\end{proof}


\begin{clm}
\label{clm:sparse-case2}
Suppose (by the alignment $\sigma_i$) $Z_i\in \W$ is $1/2$-aligned with $S_{i+1} \hdots S_n S_i S_{\N_{>i}} $ of $Y_i$, and there exists a $j>i$ such that a symbol of $Z_j$ in $L_1$ is aligned with some symbol of $S_j$ in the substring $S_{i+1}\hdots S_nS_i$. Then there exists $r$ such that $i < r \le j$ and $|C[i,r-1]|\leq \frac{2\alpha}{\beta}(r-i)$.

Similarly, suppose (by the alignment $\sigma'_i$) $Z_i\in \W$ is $1/2$-aligned with $S_{\N_{<i}}S_i S_1 \hdots S_{i-1} $ of $Y'_i$, and there exists a $j<i$ such that a symbol of $Z_j$ in $L_1$ is aligned with some symbol of $S_j$ in the substring $S_i S_1 \hdots S_{i-1}$. Then there exists $r$ such that $j \le r <i$ and $|C[r+1,i]|\leq \frac{2\alpha}{\beta}(i-r)$. 
\end{clm}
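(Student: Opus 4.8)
The plan is to prove both parts of the claim by the block-counting scheme from the proof of \Cref{clm:sparse-case1}, the one new ingredient being that the witness symbol of $Z_j$ now acts as a barrier that traps the relevant half of $Z_i$ inside a short window of blocks. I will describe the first part (the statement about $Y_i$); the second part is its mirror image, obtained by reading $Y'_i = S_{\N_{<i}} S_i S_1 \ldots S_{i-1} S_{i+1} \ldots S_n$ from right to left, using that $Z_j$ now precedes $Z_i$ in $L_1$, and with the roles of the first and last half of $Z_i$ interchanged.

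First I would locate the barrier. The witness symbol of $Z_j$ lies inside the block $S_j$, and since $i<j\le n$ this block occurs inside the segment $S_{i+1} \ldots S_n$ of $Y_i$, hence strictly before both the block $S_i$ that follows $S_n$ and the suffix $S_{\N_{>i}}$. Since $L_1$ lists its blocks in increasing index order, $Z_j$ comes after $Z_i$, so $\sigma_i$ maps every symbol of $Z_i$ to a position of $Y_i$ strictly before that witness symbol; in particular no symbol of $Z_i$ is aligned into the block $S_i$ or into $S_{\N_{>i}}$. Consequently the hypothesis that $Z_i$ is $1/2$-aligned with $S_{i+1} \ldots S_n S_i S_{\N_{>i}}$ forces at least the last $\beta m/2$ symbols of $Z_i$ to be aligned inside $S_{i+1} \ldots S_j$ (if instead the first half of $Z_i$ fell into that suffix of $Y_i$, then all of $Z_i$ would, which is only stronger). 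Since $|LCS(S_t,S_i)| \le \alpha m$ for every $t\ne i$ by \Cref{lem:edit-codeword-det} and $Z_i$ is a subsequence of $S_i$, these $\beta m/2$ symbols occupy at least $\frac{\beta}{2\alpha}$ distinct blocks among $S_{i+1}, \ldots, S_j$.

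Next I would take $r$ to be the smallest index with $i<r\le j$ such that some symbol of $Z_r$ is aligned by $\sigma_i$ into its own block $S_r$ in $Y_i$; this is well defined and $r\le j$, since the witness symbol of $Z_j$ already sits in $S_j$. For every $t$ with $i<t<r$ and $Z_t\in\W$, minimality of $r$ guarantees that $Z_t$ has no symbol aligned into $S_t$; moreover $Z_t$ lies strictly between $Z_i$ and $Z_j$ in $L_1$, so every symbol of $Z_t$ is aligned after the part of $Z_i$ lying in $S_{i+1} \ldots S_j$ and before the symbol of $Z_r$ in $S_r$, hence inside the window $S_{i+1}, \ldots, S_r$. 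Thus $Z_t$, having length at least $\beta m$, is spread over at least $\frac{\beta}{\alpha}$ distinct blocks of this window, again by \Cref{lem:edit-codeword-det}; and, because $Z_i$ precedes $Z_r$ in $L_1$, the last $\beta m/2$ symbols of $Z_i$ also lie inside this window. Since $\sigma_i$ is monotone and $L_1$ is scanned in index order, the block ranges used by that half of $Z_i$ and by the successive $Z_t \in C[i+1,r-1]$ overlap only at shared endpoints and all fit into the $r-i$ blocks $S_{i+1}, \ldots, S_r$, giving $\frac{\beta}{2\alpha}+\bigl(\frac{\beta}{\alpha}-1\bigr)\,|C[i+1,r-1]| \le r-i$. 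As $\alpha\le 1/8$ and $\beta=\sqrt{8\alpha}$ yield $\frac{\beta}{\alpha}-1\ge \frac{\beta}{2\alpha}$, and $C[i,r-1]=\{Z_i\}\cup C[i+1,r-1]$, this rearranges to $|C[i,r-1]|\le\frac{2\alpha}{\beta}(r-i)$. The second part follows in exactly the same way, taking $r$ to be the largest index in $[j,i)$ at which $Z_r$ is aligned into its own block in $Y'_i$, and yields $|C[r+1,i]|\le\frac{2\alpha}{\beta}(i-r)$.

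The step I expect to be the main obstacle is making this choice of $r$ do what it should: one must check that every $Z_t$ with $i<t<r$ really is aligned entirely inside $S_{i+1}\ldots S_r$ and really does avoid its own block $S_t$, excluding both the possibility that $Z_t$ spills symbols into the prefix $S_1\ldots S_{i-1}$ of $Y_i$ and the possibility (for $t$ large) that it touches $S_t$. The former is excluded because $Z_i$ itself has symbols aligned into $S_{i+1}\ldots S_j$, which pushes everything after $Z_i$ in $L_1$ past the prefix; the latter is precisely what minimality of $r$ gives. Once these confinement facts are in place, the counting inequality and its rearrangement are routine and identical to those in \Cref{clm:sparse-case1}.
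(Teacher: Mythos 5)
Your proposal is correct and follows essentially the same route as the paper: choose $r$ as the extremal index in $(i,j]$ (resp.\ $[j,i)$) whose block $Z_r$ touches its own $S_r$, use the witness symbol of $Z_r$ as a barrier confining the relevant half of $Z_i$ and all intermediate $Z_t$ to the window $S_{i+1},\ldots,S_r$, charge $\frac{\beta}{2\alpha}$ blocks to $Z_i$ and $\frac{\beta}{\alpha}$ to each $Z_t\in C[i+1,r-1]$ via the pairwise LCS bound on the $S_p$'s, and account for shared boundary blocks. The paper's write-up simply cites the argument of Claim~\ref{clm:sparse-case1} for these steps; your version spells out the confinement details explicitly, which is a faithful elaboration rather than a different proof.
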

\begin{proof}
Suppose $Z_i$ is $1/2$-aligned with $S_{i+1} \hdots S_n S_i S_{\N_{>i}} $ of $Y_i$ and there exists a $j>i$ such that a symbol of $Z_j$ in $L_1$ is aligned (by $\sigma_i$) with some symbol of $S_j$ in the substring $S_{i+1}\hdots S_nS_i$. Let us choose $r$ to be the smallest $j$ with the above condition. By the argument used in the proof of Claim~\ref{clm:sparse-case1}, $Z_i$ requires at least $\frac{\beta}{2\alpha}$ blocks from $\{S_{i+1}, S_{i+2}, \cdots, S_{r}\}$, and every element in $C[i+1,r-1]$ requires at least $\frac{\beta}{\alpha}$ blocks from $\{S_{i+1}, S_{i+2}, \cdots, S_{r}\}$. Again, any two $Z_p, Z_{p+1} \in C[i,r-1]$ may share a block (more specifically, the last block used for $Z_p$ and the first block used for $Z_{p+1}$) for mapping. So we get
\begin{align*}
&\frac{\beta}{2\alpha}+ |C[i+1,r-1]|(\frac{\beta}{\alpha}-1)\leq r-i
\Rightarrow  \frac{\beta}{2\alpha}|C[i,r-1]|\leq r-i.
\end{align*}

Similarly, suppose $Z_i$ is $1/2$-aligned with $S_{\N_{<i}}S_i S_1 \hdots S_{i-1} $ of $Y'_i$ and there exists a $j<i$ such that a symbol of $Z_j$ in $L_1$ is aligned (by $\sigma'_i$) with some symbol of $S_j$ in the substring $S_i S_1 \hdots S_{i-1}$. Let us choose $r$ to be the largest $j$ with the above condition. Then we get 
\begin{align*}
&\frac{\beta}{2\alpha}+ |C[r+1,i-1]|(\frac{\beta}{\alpha}-1)\leq i-r
\Rightarrow  \frac{\beta}{2\alpha}|C[r+1,i]|\leq i-r.
\end{align*}
\end{proof}

\subsubsection{Dense Case: Proper mapping implies large number of neighbors}
Recall that $V_H=\{v_t \mid Z_t \in \W\}$. For each $v_i \in V_H$ further define $V_H^{>i}:=\{v_t\in V_H \mid t >i\}$ and $V_H^{<i}:=\{v_t\in V_H \mid t <i\}$. The next two claims show that if $Z_i$ is aligned with $S_i$ in $Y_i$ and $Y'_i$ then "most" of the vertices in $V_H$ are connected to (i.e., neighbors of) the vertex $v_i$. This eventually helps us to show that density of $H$ is high.
\begin{clm}
\label{clm:dense-forward}
 Suppose (by the alignment $\sigma_i$) $Z_i\in \W$ is $1/2$-aligned with $S_{i+1} \hdots S_n S_i S_{\N_{>i}} $ in $Y_i$, and there exists no $j>i$ such that a symbol of $Z_j\in \W$ is aligned with some symbol of $S_j$ in the substring $S_{i+1} \hdots S_n$. Then
$$|V_H^{>i} \bigcap \N_{>}(v_i)| + \frac{\beta}{2\alpha}|V_H^{>i} \setminus \N_{>}(v_i)|\le 2(n-i)+1.$$
\end{clm}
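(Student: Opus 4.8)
The plan is to track, through the fixed alignment $\sigma_i$, where each block $Z_t$ with $v_t\in V_H^{>i}$ is placed inside $Y_i$, use the dense‑case hypothesis to force the ``non‑neighbour'' blocks to be matched wastefully, and then count blocks of $Y_i$. Throughout recall $Y_i=S_1\cdots S_{i-1}\,W$ with $W:=S_{i+1}\cdots S_n\,S_i\,S_{\N_{>i}}$, that $|Z_t|\ge\beta m$ for $Z_t\in\W$, and that $\beta=\sqrt{8\alpha}$ with $\alpha<1/8$.

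\emph{Localization, and non‑neighbours are wasteful.} First I would show $\sigma_i$ maps every $Z_t$ with $v_t\in V_H^{>i}$ entirely into the suffix $W$ of $Y_i$: since $Z_i$ is $1/2$‑aligned with $W$, at least one matched symbol of $Z_i$ lies in $W$, and since $Z_t$ comes after $Z_i$ in $L_1$ and $\sigma_i$ is strictly monotone on matched symbols, every matched symbol of $Z_t$ lies to the right of it, hence in $W$. Next, fix $v_t\in V_H^{>i}\setminus\N_{>}(v_i)$. Inside $W$ the block labelled $S_t$ occurs only inside $S_{i+1}\cdots S_n$ (it is not the single block $S_i$ since $t\neq i$, and not among $S_{\N_{>i}}$ since $v_t\notin\N_{>}(v_i)$); by the dense‑case hypothesis no symbol of any $Z_j\in\W$ with $j>i$, in particular of $Z_t$, is aligned with the block $S_j$ inside $S_{i+1}\cdots S_n$. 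Hence every matched symbol of $Z_t$ is aligned with a block $S_{t'}$ with $t'\neq t$, and within each such block the matched symbols of $Z_t$ form a common subsequence of $Z_t\subseteq S_t$ and $S_{t'}$, of length at most $|LCS(S_t,S_{t'})|\le\alpha m$ by Lemma~\ref{lem:edit-codeword-det}. Since $|Z_t|\ge\beta m$, the block $Z_t$ must occupy at least $\beta/\alpha$ consecutive blocks of $W$; every $Z_t$ with $v_t\in V_H^{>i}$ trivially occupies at least one block.

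\emph{The block count.} Order $V_H^{>i}$ by vertex index; the corresponding $Z_t$'s then appear left to right inside $W$, consecutive ones sharing at most one block, and $W$ has exactly $(n-i)+1+|\N_{>}(v_i)|\le 2(n-i)+1$ blocks. Charging $1$ block to each $v_t\in V_H^{>i}\cap\N_{>}(v_i)$ and $\beta/\alpha$ blocks to each $v_t\in V_H^{>i}\setminus\N_{>}(v_i)$, and subtracting the at most $|V_H^{>i}|-1$ shared blocks, gives $\bigl(\tfrac{\beta}{\alpha}-1\bigr)\,|V_H^{>i}\setminus\N_{>}(v_i)|\le (n-i)+|\N_{>}(v_i)|$. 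Using $\tfrac{\beta}{\alpha}-1\ge\tfrac{\beta}{2\alpha}\ge 1$ (valid as $\beta=\sqrt{8\alpha}$, $\alpha<1/8$) and $|V_H^{>i}\cap\N_{>}(v_i)|\le|V_H^{>i}|\le n-i$ then yields a bound of the claimed shape; obtaining the precise constant $2(n-i)+1$ requires the refinement below.

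\textbf{Main obstacle.} The delicate step is pinning down the constant in the last display: the naive count above only controls $|V_H^{>i}\setminus\N_{>}(v_i)|$ through $(n-i)+|\N_{>}(v_i)|$, which can be as large as $2(n-i)$, so I expect one must do a small case analysis on where $Z_i$ itself lands. If some matched symbol of $Z_i$ sits in the block $S_i$, then all later $Z_t$'s are confined to $S_i\,S_{\N_{>i}}$ (only $1+|\N_{>}(v_i)|\le n-i+1$ blocks) and the estimate $\tfrac{\beta}{2\alpha}|V_H^{>i}\setminus\N_{>}(v_i)|\le n-i$ is immediate; in the complementary sub‑case one has to argue that the non‑neighbour $Z_t$'s all lie inside $S_{i+1}\cdots S_n\,S_i$ (a non‑neighbour block landing in $S_{\N_{>i}}$ would drag all later $V_H^{>i}$‑blocks into $S_{\N_{>i}}$, contradicting their block demand once $\alpha$ is small), which again confines the wasteful blocks to a window of length $\le n-i+1$. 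Localization and the LCS estimate are routine given the alignment definition and Lemma~\ref{lem:edit-codeword-det}; the accounting of shared boundary blocks and this sub‑case split are the only technical points.
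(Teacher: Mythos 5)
Your proposal follows the paper's proof essentially verbatim: localize all $Z_t$ with $t>i$ to the suffix $S_{i+1}\hdots S_n S_i S_{\N_{>i}}$ (at most $2(n-i)+1$ blocks), use the hypothesis together with the pairwise bound $|LCS(S_t,S_{t'})|\le\alpha m$ to show that each $Z_t$ with $v_t\in V_H^{>i}\setminus\N_{>}(v_i)$ costs at least $\beta/\alpha$ blocks, and correct for consecutive blocks sharing a block via $\frac{\beta}{\alpha}-1\ge\frac{\beta}{2\alpha}$. The accounting subtlety you flag under ``Main obstacle'' is legitimate, but the paper does not treat it any more carefully than your main line does---it simply charges one block to each neighbour and $\frac{\beta}{2\alpha}$ to each non-neighbour against the $2(n-i)+1$ available blocks and asserts the displayed inequality---so your version, with the extra sub-case analysis, is if anything more cautious than the original.
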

\begin{proof}
$Z_i$ is $1/2$-aligned with $S_{i+1} \hdots S_n S_i S_{\N_{>i}} $ of $Y_i$. So to align all $Z_r\in C[i+1,n]$ (note, $|C[i+1,n]|=|V_H^{>i}|$) at most $2(n-i)+1$ blocks of $S_p$'s are available. Since for no $j>i$ a symbol of $Z_j\in \W$ is aligned with some symbol of $S_j$ in $S_{i+1} \hdots S_n$, each $Z_r$ such that $v_r \in V_H^{>i} \setminus \N_{>}(v_i)$ requires at least $\frac{\beta}{\alpha}$ blocks of $S_p$'s to map. Any two $Z_r,Z_{r+1} $ such that $v_r,v_{r+1} \in V_H^{>i} \setminus \N_{>}(v_i)$ may share a block (more specifically, the last block used for $Z_p$ and the first block used for $Z_{p+1}$) for mapping. Recall for our choice of parameters $\alpha, \beta$, $\frac{\beta}{\alpha}-1 \geq \frac{\beta}{2\alpha}$. So we get 
$$|V_H^{>i} \bigcap \N_{>}(v_i)| + \frac{\beta}{2\alpha}|V_H^{>i} \setminus \N_{>}(v_i)|\le 2(n-i)+1.$$
\end{proof}
Similarly, we consider the mapping of $Z_i$ in the string $Y'_i$.
\begin{clm}
\label{clm:dense-back}
 Suppose (by the alignment $\sigma'_i$) $Z_i\in \W$ is $1/2$-aligned with $S_{\N_{<i}}S_i S_1 \hdots S_{i-1} $ in $Y'_i$, and there exists no $j < i$ such that a symbol of $Z_j\in \W$ is aligned with some symbol of $S_j$ in the substring $S_1 \hdots S_{i-1}$. Then
$$|V_H^{<i} \bigcap \N_{<}(v_i)| + \frac{\beta}{2\alpha}|V_H^{<i} \setminus \N_{<}(v_i)|\le 2i-1.$$
\end{clm}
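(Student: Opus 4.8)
\textbf{Proof proposal for Claim~\ref{clm:dense-back}.}
The plan is to mirror the argument used for Claim~\ref{clm:dense-forward}, now working inside the string $Y'_i = S_{\N_{<i}} S_i S_1 \hdots S_{i-1} S_{i+1} \hdots S_n$ with the alignment $\sigma'_i$. First I would observe that the relevant prefix $S_{\N_{<i}} S_i S_1 \hdots S_{i-1}$ of $Y'_i$ consists of at most $(i-1)+1+(i-1)=2i-1$ blocks, since $|\N_{<}(v_i)| \le i-1$. Because $Z_i$ is $1/2$-aligned with this prefix and every block $Z_r$ with $v_r\in V_H^{<i}$ precedes $Z_i$ in $L_1$ (recall $|C[1,i-1]|=|V_H^{<i}|$), monotonicity of $\sigma'_i$ forces all of these $|V_H^{<i}|$ blocks $Z_r$ to be aligned entirely within this prefix of at most $2i-1$ blocks $S_p$.

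Next I would split $V_H^{<i}$ according to adjacency with $v_i$. For $v_r\in V_H^{<i}\setminus\N_{<}(v_i)$, the block $S_r$ does not appear in $S_{\N_{<i}}$, and by the hypothesis of the claim no symbol of such a $Z_r$ is aligned with its own block $S_r$ inside $S_1\hdots S_{i-1}$ either; hence $Z_r$ can only be aligned with parts of blocks $S_p$ with $p\ne r$. Since $|LCS(S_p,S_r)|\le\alpha m$ by Lemma~\ref{lem:edit-codeword-det} while $|Z_r|\ge\beta m$, aligning $Z_r$ requires at least $\beta/\alpha$ distinct blocks. For $v_r\in V_H^{<i}\cap\N_{<}(v_i)$ I would only use the trivial lower bound of one block. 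Charging these counts against the at most $2i-1$ available blocks, and accounting for the fact that two consecutive such blocks $Z_r,Z_{r+1}$ may overlap in at most one boundary block, gives
\[
|V_H^{<i}\cap\N_{<}(v_i)| + \Big(\tfrac{\beta}{\alpha}-1\Big)\,|V_H^{<i}\setminus\N_{<}(v_i)| \le 2i-1,
\]
and the claimed inequality follows from $\tfrac{\beta}{\alpha}-1\ge\tfrac{\beta}{2\alpha}$, which holds since $\alpha\le 1/8$ and $\beta=\sqrt{8\alpha}$.

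I do not expect any genuinely new obstacle here beyond bookkeeping, since this is the exact reflection of Claim~\ref{clm:dense-forward}. The one point that needs a little care is verifying that the $1/2$-alignment hypothesis really does confine all of $V_H^{<i}$ to the prefix $S_{\N_{<i}}S_iS_1\hdots S_{i-1}$: one has to treat separately the case in which it is the \emph{last} half of the symbols of $Z_i$ that lands in this prefix, in which case — as already noted in the proof of Claim~\ref{clm:sparse-case1} — the whole of $Z_i$ actually lands there, which is only stronger. Everything else is a routine symmetric rewrite of the forward case.
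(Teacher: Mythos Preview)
Your proposal is correct and follows essentially the same argument as the paper's own proof: count the at most $2i-1$ available blocks in the prefix $S_{\N_{<i}}S_iS_1\hdots S_{i-1}$, observe that each $Z_r$ with $v_r\in V_H^{<i}\setminus\N_{<}(v_i)$ must be spread over at least $\beta/\alpha$ foreign blocks, allow one shared boundary block between consecutive $Z$'s, and use $\beta/\alpha-1\ge\beta/(2\alpha)$. Your write-up is in fact slightly more explicit than the paper's (you spell out why $S_r$ cannot occur in $S_{\N_{<i}}$ for non-neighbors, and you address the two sub-cases of $1/2$-alignment), but the route is identical.
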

\begin{proof}
$Z_i$ is $1/2$-aligned with $S_{\N_{<i}}S_i S_1 \hdots S_{i-1}$ of $Y'_i$. So to align all $Z_r \in C[1,i-1]$ (note, $|C[1,i-1]|=|V_H^{<i}|$), at most $2i-1$ blocks of $S_p$'s are available. Since for no $j<i$ a symbol of $Z_j\in \W$ is aligned with some symbol of $S_j$ in $S_1 \hdots S_{i-1}$, each $Z_r$ such that $v_r \in V_H^{<i} \setminus \N_{<}(v_i)$ requires at least $\frac{\beta}{\alpha}$ blocks of $S_p$'s to map. Any two $Z_r,Z_{r+1} $ such that $v_r,v_{r+1} \in V_H^{<i} \setminus \N_{<}(v_i)$ may share a block (more specifically, the last block used for $Z_p$ and the first block used for $Z_{p+1}$) for mapping. Recall for our choice of parameters $\alpha, \beta$, $\frac{\beta}{\alpha}-1 \geq \frac{\beta}{2\alpha}$. So we get 
$$|V_H^{<i} \bigcap \N_{<}(v_i)| + \frac{\beta}{2\alpha}|V_H^{<i} \setminus \N_{<}(v_i)|\le 2i-1.$$
\end{proof}

\subsubsection{Removing sparse blocks from LCS}
Next we choose a subset of vertices from the set $V_H$ so that the graph induced by that subset has high density. For that purpose we remove the "sparse" portions from the subsequence $L_1$ in the following way:
\begin{enumerate}
\item Initialize an empty set $\mathcal{T}$.
\item For each $Z_i \in \W$ identify the largest $j>i$ such that $\frac{|C[i,j]|}{j-i+1}\le \frac{2\alpha}{\beta}$, and then add all $Z_k \in C[i,j]$ in the set $\mathcal{T}$. (If no such $j$ exists then do not add anything to $\mathcal{T}$.)
\item Define a new set $\W'=\W\setminus \mathcal{T}$.
\end{enumerate}
Let $L_2$ be the string formed by removing all $Z_i \not \in \W'$ from $L_1$. Let us also define a set of vertices $V'_H=\{v_t | Z_t \in \W'\}$. (Note, $V'_H\subseteq V_H$.) Now we will argue that the set $V_H$ has not shrunk by much after removing the sparse blocks and each vertex in $V'_H$ has high degree in the subgraph $H$, which eventually implies that the subgraph $H$ has high density.
\begin{clm}
\label{clm:large-subgraph}
$|V'_H| \ge |V_H| - \frac{4\alpha}{\beta}n$.
\end{clm}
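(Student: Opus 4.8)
The plan is to bound the size of the removed set $\mathcal{T}$. Every block added to $\mathcal{T}$ in Step~2 comes from some $C[i,j]$, and $C[i,j]\subseteq\W$ by definition, so $\mathcal{T}\subseteq\W$ and therefore $|V'_H|=|\W'|=|\W|-|\mathcal{T}|=|V_H|-|\mathcal{T}|$; thus it suffices to show $|\mathcal{T}|\le\frac{4\alpha}{\beta}n$. For each $Z_i\in\W$ for which Step~2 succeeds, let $j_i$ be the largest index with $\frac{|C[i,j_i]|}{j_i-i+1}\le\frac{2\alpha}{\beta}$ and set $I_i:=[i,j_i]\subseteq[n]$; let $U:=\bigcup_i I_i$ be the union of these index intervals. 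Unwinding the definition of Step~2, $\mathcal{T}$ is exactly the set of blocks $Z_k\in\W$ whose index $k$ lies in $U$, so $|\mathcal{T}|=|\{k\in U:Z_k\in\W\}|$.

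First I would invoke a standard interval-covering fact: from the family $\{I_i\}$ one can extract a subfamily $I_{i_1},\dots,I_{i_p}$ with $\bigcup_{\ell=1}^{p} I_{i_\ell}=U$ in which no point of $[n]$ lies in more than two of the chosen intervals. This follows from the usual left-to-right greedy, applied to each connected component $[a,b]$ of $U$ separately: the left endpoint $a$ is necessarily the left endpoint of some $I_i$ (otherwise an index below $a$ would lie in $U$), so take that interval; then repeatedly take, among all $I_i$ whose left endpoint is at most one more than the current right endpoint and whose right endpoint strictly exceeds it, the one reaching farthest to the right, stopping when $b$ is reached. The greedy choice forces the left endpoint of the $(\ell+1)$-th chosen interval to exceed the right endpoint of the $(\ell-1)$-th, so consecutive chosen intervals may overlap but intervals two apart are disjoint, which gives the double-cover bound.

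The conclusion is then a one-line computation. Since the chosen intervals cover $U$,
\[
|\mathcal{T}|=\Bigl|\bigcup_{\ell=1}^{p}C[i_\ell,j_{i_\ell}]\Bigr|\;\le\;\sum_{\ell=1}^{p}|C[i_\ell,j_{i_\ell}]|\;\le\;\frac{2\alpha}{\beta}\sum_{\ell=1}^{p}(j_{i_\ell}-i_\ell+1),
\]
the last step being the defining property of $j_{i_\ell}$. Because every point of $[n]$ lies in at most two of the $p$ chosen intervals, $\sum_{\ell=1}^{p}(j_{i_\ell}-i_\ell+1)\le 2n$, so $|\mathcal{T}|\le\frac{4\alpha}{\beta}n$ and hence $|V'_H|\ge|V_H|-\frac{4\alpha}{\beta}n$.

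I expect the covering step to be the only delicate point. The naive attempt --- greedily pick pairwise-\emph{disjoint} sparse intervals and jump past each one --- fails, because a later sparse interval can reach back across a gap left by an earlier jump, so the resulting disjoint subfamily need not cover all of $U$; permitting each point to be covered twice repairs this and is exactly the source of the factor $2$ that converts $\frac{2\alpha}{\beta}$ into $\frac{4\alpha}{\beta}$. The remaining ingredients (that $\mathcal{T}\subseteq\W$, and that each $j_i$ satisfies its sparsity inequality) are immediate from the definitions.
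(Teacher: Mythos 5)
Your proof is correct and follows essentially the same route as the paper's: both arguments reduce the claim to bounding $|\mathcal{T}|$ by extracting a subfamily of the sparse intervals $C[i,j_i]$ that still covers everything removed but in which every index is covered at most twice, so that summing the per-interval bound $|C[i,j]|\le\frac{2\alpha}{\beta}(j-i+1)$ over the subfamily costs only an extra factor of $2$. The paper packages this as a decomposition of $\mathcal{T}$ into maximal chunks $C[p_\ell,q_\ell]$ each covered by a \emph{minimal} interleaved sequence $(i_1,j_1),\dots,(i_s,j_s)$ with $i_{t+2}\ge j_t+1$, while you obtain the same double-cover property via a left-to-right greedy over the union of the intervals; the two devices are interchangeable and yield the identical bound $\frac{4\alpha}{\beta}n$.
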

\begin{proof}
	Let us consider the set $\mathcal{T}$. We can write $\mathcal{T}$ as a union of disjoint subsets as $\mathcal{T}=C[p_1,q_1]\cup C[p_2,q_2]\cup \cdots \cup C[p_{r},q_{r}]$ for some integer $r \in [n]$, such that $\forall_{1\le \ell \le r-1}$ $C[q_{\ell},p_{\ell+1}]\ne \emptyset$ (see Figure~\ref{fig:split-T}).

	\begin{figure}
		\begin{center}
			
			\begin{tikzpicture}[xscale=0.8, yscale=0.7 ]
			
			\draw (0,0) rectangle (10,1);
			\draw[pattern=north east lines, pattern color=red] (2,0) rectangle (5,1);
			\draw[pattern=north east lines, pattern color=red] (6,0) rectangle (8,1);
			\node[] at (1.8,0.8) {$p_1$};
			\node[] at (5.2,0.8) {$q_1$};
			\node[] at (5.8,0.8) {$p_2$};
			\node[] at (8.2,0.8) {$q_2$};

			\draw (1,-1) rectangle (9,-2); 
			\draw[dashed] (1,-1) -- (0,-1);
			\draw[dashed] (1,-2) -- (0,-2);
			\draw[dashed] (9,-1) -- (10,-1);
			\draw[dashed] (9,-2) -- (10,-2);
			\draw[color=green] (3,-1) -- (3,-2);
			\draw[color=red] (4,-1) -- (4,-2);
			\draw[color=blue] (6,-1) -- (6,-2);
			\draw[color=green] (7,-1) -- (7,-2);
			\draw[dashed] (2,0) -- (1,-1);
			\draw[dashed] (5,0) -- (9,-1);
			
			\draw[color=red] (1,-1) -- (1,-2);
			\node[] at (0.8,-1.8) {$p_1$};
			\node[] at (0.8,-2.2) {$i_1$};
			
			\node[] at (2.8, -1.8) {$i_2$};
			
			\node[] at (4.2,-2.2) {$j_1$};
			
			\node[] at (5.8,-2.2) {$i_3$};
			
			\node[] at (7.2,-1.8) {$j_2 $};
			
			\node[] at (9.2,-1.8) {$q_1$};
			
			\node[] at (9.2,-2.2) {$j_3$};

			\draw [decorate,decoration={brace,amplitude=10pt},xshift=0pt,yshift=-1pt, color=red,thick]
			(4,-2) -- (1,-2) node [black,midway,yshift=--1cm] {};
			
			\draw [decorate,decoration={brace,amplitude=10pt},xshift=0pt,yshift=2pt, color=green,thick]
			(3,-1) -- (7,-1) node [black,midway,yshift=-0.5cm] {};
			
			\draw[color=blue] (9,-1) -- (9,-2);
			\draw [decorate,decoration={brace,amplitude=10pt},xshift=0pt,yshift=-1pt, color=blue,thick]
			(9,-2) -- (6,-2) node [black,midway,yshift=--1cm] {};
			
			\draw [decorate,decoration={brace,amplitude=10pt},xshift=0pt,yshift=3pt, color=black,thick]
			(0,1) -- (10,1) node [black,midway,yshift=-0.5cm] {};
			
			\node[] at (5,1.8) {${C[1,n]}$};

			\node[text width=5cm] at (14,1) {Shaded region is included in $\mathcal{T}$};
			
			\node[text width=5cm] at (14,-1.1) {Considering $s=3$, $(i_1,j_1)$,$(i_2,j_2)$,$(i_3,j_3)$ is a series of pairs to cover $C[p_1,q_1]$ where $i_1=p_1$ and $j_3=q_1$ };

			\end{tikzpicture}
			
			\caption{$\mathcal{T}$ as a union of disjoint subsets}
			\label{fig:split-T}
		\end{center}
		
	\end{figure}
	
	Now if we could show that for each $\ell \in [r]$, $|C[p_{\ell},q_{\ell}]| \le \frac{4\alpha}{\beta}(q_{\ell}-p_{\ell})$, then
	\begin{align*}
	|\mathcal{T}| & = \sum_{\ell=1}^r |C[p_{\ell},q_{\ell}]| \le \frac{4 \alpha}{\beta} \sum_{\ell=1}^r (q_{\ell}-p_{\ell}) \le \frac{4 \alpha}{\beta}n
	\end{align*}
	where the last inequality is true since $p_1<q_1<p_2<q_2<\cdots<p_{r}<q_r$. So to conclude the proof of the claim next we show that for all $\ell \in [r]$ $|C[p_{\ell},q_{\ell}]| \le \frac{4\alpha}{\beta}(q_{\ell}-p_{\ell})$.
	
	It is immediate from the construction of the set $\mathcal{T}$ that there exists a sequence of pair of indices $(i_1,j_1),\cdots,(i_s,j_s)$ (for some positive integer $s$) where $i_1=p_{\ell}$ and $j_s=q_{\ell}$, such that for all $t \in [s]$ while processing $Z_{i_t}$ we add blocks of $C[i_t,j_t]$ in $\mathcal{T}$, and $C[p_{\ell},q_{\ell}] = \bigcup_{t \in [s]}C[i_t,j_t]$. We can further assume that there exists no $t'\in [s]$ such that $C[i_{t'},j_{t'}] \subseteq \bigcup_{t \in [s]\setminus\{t'\}}C[i_t,j_t]$. (In words it means that $C[i_1,j_1],\cdots,C[i_s,j_s]$ is a minimal sequence of subsets whose union is $C[i_1,j_s]$.) Due to this assumption we can write that $i_2 \le j_1 \le i_3 \le j_2 \le \cdots \le i_s \le j_{s-1}$ and $\forall t\in [s-2], i_{t+2}\geq j_t +1$ (see Figure~\ref{fig:split-T}).
	So,
	\vspace{-9pt}
	\begin{align*}
	|C[p_{\ell},q_{\ell}]| \le \sum_{t=1}^{s} |C[i_t,j_t]| &\le \frac{2\alpha}{\beta} \sum_{t=1}^{s} (j_t-i_t+1)\\
	 & = \frac{2 \alpha}{\beta} \Big[s+ (j_s-i_1) + \sum_{t=1}^{s-1}(j_t-i_{t+1})\Big]\\
	& \leq \frac{2 \alpha}{\beta} \Big[s+ (j_s-i_1) + (j_{s-1}-i_{2} -(s-2) )\Big]\\
	& \leq \frac{2 \alpha}{\beta} \Big[ 2(j_s-i_1) \Big]
	\end{align*}
	where second last inequality uses the fact that $\forall t\in [s-2], i_{t+2}\geq j_t +1 $ and last inequality uses the fact that $j_s\geq j_{s-1}+1$ and $i_2\geq i_1+1$. 
	Hence we conclude that $|C[p_{\ell},q_{\ell}]| \le \frac{4\alpha}{\beta}(q_{\ell}-p_{\ell})$, and this completes the proof.
\end{proof}

\begin{clm}
\label{clm:high-density}
For each vertex $v_i \in V'_H$, $|V_H \bigcap \N(v_i)| \ge |V_H| - \frac{4 \alpha}{\beta}n$.
\end{clm}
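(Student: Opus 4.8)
The plan is to derive the claim from the ``dense case'' estimates already proved in Claims~\ref{clm:dense-forward} and~\ref{clm:dense-back}. The one thing that needs work is to show that for $v_i\in V'_H$ the surviving block $Z_i$ is \emph{properly aligned}: it falls into the dense Case~3 with respect to $Y_i$ and into the analogous dense case with respect to $Y'_i$, not into any sparse case. Granting this, the bound is an easy count. By the dense case for $Y_i$, Claim~\ref{clm:dense-forward} gives $\tfrac{\beta}{2\alpha}\,|V_H^{>i}\setminus\N_{>}(v_i)|\le 2(n-i)+1$, hence $|V_H^{>i}\setminus\N_{>}(v_i)|\le \tfrac{2\alpha}{\beta}(2(n-i)+1)$; by the dense case for $Y'_i$, Claim~\ref{clm:dense-back} gives $|V_H^{<i}\setminus\N_{<}(v_i)|\le \tfrac{2\alpha}{\beta}(2i-1)$. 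Since $V_H=V_H^{<i}\sqcup\{v_i\}\sqcup V_H^{>i}$ and a vertex of $V_H$ adjacent to $v_i$ lies in $\N_{<}(v_i)$ or $\N_{>}(v_i)$ according to whether its index is below or above $i$, adding the two inequalities yields $|V_H\setminus\N(v_i)|\le 1+\tfrac{2\alpha}{\beta}\bigl((2i-1)+(2(n-i)+1)\bigr)=1+\tfrac{4\alpha}{\beta}n$, i.e.\ $|V_H\cap\N(v_i)|\ge |V_H|-\tfrac{4\alpha}{\beta}n$ (the additive constant is harmless and can be absorbed into the bound).

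The crux --- and the step I expect to be the main obstacle --- is to rule out the sparse cases when $v_i\in V'_H$. Suppose $Z_i$ were in a sparse case; there are two flavours. In the \emph{forward} sub-cases, namely Case~2 with respect to $Y_i$ and ``$Z_i$ is $1/2$-aligned with the substring $S_{i+1}\hdots S_n$ of $Y'_i$'', Claims~\ref{clm:sparse-case2} and~\ref{clm:sparse-case1} produce a $\tfrac{2\alpha}{\beta}$-sparse interval that \emph{begins} at the block $Z_i$ itself; because $\beta=\sqrt{8\alpha}$ with $\alpha<1/8$ makes $\tfrac{2\alpha}{\beta}<1$, this interval is not just $\{Z_i\}$, so during the construction of $\mathcal{T}$ the maximal $\tfrac{2\alpha}{\beta}$-sparse interval starting at index $i$ reaches strictly past $i$ and therefore throws $Z_i$ into $\mathcal{T}$ --- contradicting $Z_i\in\W'=\W\setminus\mathcal{T}$. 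In the \emph{backward} sub-cases, namely Case~1 with respect to $Y_i$ and ``$Z_i$ is $1/2$-aligned with $S_{\N_{<i}}S_iS_1\hdots S_{i-1}$ of $Y'_i$'', Claims~\ref{clm:sparse-case1} and~\ref{clm:sparse-case2} instead hand us a $\tfrac{2\alpha}{\beta}$-sparse interval $[a,i]$ that \emph{ends} at $Z_i$ but whose left endpoint $a<i$ need not index a block of $\W$. To still force $Z_i\in\mathcal{T}$ I would take the smallest $\W$-index $a'$ inside $[a,i]$ and argue, using the block-counting estimates behind Claims~\ref{clm:sparse-case1}--\ref{clm:sparse-case2} (each block of $\W$ inside the interval needs at least $\tfrac{\beta}{\alpha}$ of the $S_p$'s to be mapped, and $\tfrac{\beta}{\alpha}-1\ge\tfrac{\beta}{2\alpha}$ for $\alpha<1/8$), that a $\tfrac{2\alpha}{\beta}$-sparse interval starting at $a'$ still reaches index $i$; processing $Z_{a'}$ then throws $Z_i$ into $\mathcal{T}$, again a contradiction. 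In either flavour $Z_i$ lies in the dense case with respect to both $Y_i$ and $Y'_i$, which is exactly what the first paragraph uses.

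So the genuinely delicate point is reconciling the two ``directions'': Claims~\ref{clm:sparse-case1}--\ref{clm:sparse-case2} anchor their sparse intervals at $Z_i$ (on the right, in the backward sub-cases), whereas the removal procedure that builds $\mathcal{T}$ only grows sparse intervals rightward out of $\W$-blocks. Certifying a single $\tfrac{2\alpha}{\beta}$-sparse interval that simultaneously starts at a block of $\W$ and still contains $Z_i$ is where the bookkeeping concentrates, and where the precise parameter choice $\beta=\sqrt{8\alpha}$ with $\alpha\in(0,1/8)$ --- which makes $\tfrac{2\alpha}{\beta}<1$ and $\tfrac{\beta}{\alpha}-1\ge\tfrac{\beta}{2\alpha}$, killing off the degenerate length-one intervals --- is what is used.
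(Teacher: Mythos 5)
Your overall route is the same as the paper's: show that every $Z_i\in\W'$ must fall into the dense cases for both $Y_i$ and $Y'_i$ (because a sparse case would, via Claims~\ref{clm:sparse-case1} and~\ref{clm:sparse-case2}, yield a $\tfrac{2\alpha}{\beta}$-sparse interval forcing $Z_i$ into $\mathcal{T}$), then add the inequalities of Claims~\ref{clm:dense-forward} and~\ref{clm:dense-back}. Your first paragraph is fine (the extra additive $1$ from $v_i\notin\N(v_i)$ is harmless downstream), your treatment of the forward sub-cases is correct, and you have correctly located the delicate point: in the backward sub-cases the sparse interval ends at $i$ but its left endpoint need not index a block of $\W$, whereas $\mathcal{T}$ is only grown rightward out of $\W$-blocks. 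The paper's own write-up passes over this silently ("when we process $Z_j$ we add all the blocks of $C[j,i]$" tacitly assumes $Z_j\in\W$), so flagging it is to your credit.

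However, the repair you propose does not work. Moving the left endpoint from $a$ to the smallest $\W$-index $a'\in[a,i]$ leaves the count unchanged, $C[a',i]=C[a,i]$, while shortening the interval, so the ratio $|C[a',i]|/(i-a'+1)$ can only increase; sparsity is not inherited. The block-counting estimates cannot rescue this: they bound $|C[a,i]|$ by the number of blocks $S_j,\ldots,S_{i-1}$ of $Y_i$ available for the alignment, where $Z_j$ is the last block properly aligned to its own $S_j$, i.e.\ by a quantity controlled by $i-j\geq i-a+1$, and they say nothing about $i-a'+1$ (the $\W$-blocks of $C[a,i]$ may all have indices clustered just below $i$ while being aligned far to the left inside $S_j\cdots S_{i-1}$). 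In the extreme case $C[a,i]=\{Z_i\}$ one even gets $a'=i$ and the interval $[i,i]$ has density $1$. Anchoring at the $\W$-index $j$ instead fails by an additive one: it only gives $|C[j,i]|\le\tfrac{2\alpha}{\beta}(i-j)+1$, which exceeds $\tfrac{2\alpha}{\beta}(i-j+1)$ precisely because $\tfrac{2\alpha}{\beta}<1$. (A sharper look at the counting inequality $\tfrac{\beta}{2\alpha}+(\tfrac{\beta}{\alpha}-1)(|C[j+1,i]|-1)\le i-j$ shows it does force $[j,i]$ itself to be sparse whenever $|C[j+1,i]|\ge 2$, so the unhandled residue is the case of an isolated $Z_i$ with $C[j+1,i]=\{Z_i\}$ and $i-j$ just above $\tfrac{\beta}{2\alpha}$, together with the boundary case $j=0$; closing it appears to need either a modified threshold in the construction of $\mathcal{T}$ or a separate argument for such blocks.) So the patch as stated leaves a genuine gap --- one that the paper's own proof of this claim shares.
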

\begin{proof}
By the construction of $\W'$, for each $Z_i \in \W'$ we know that there exists no $j >i$ (or $<i$) such that $\frac{|C[i,j]|}{j-i+1} \le \frac{2\alpha}{\beta}$ (or $\frac{|C[j,i]|}{i-j+1} \le \frac{2\alpha}{\beta}$). Then by Claim~\ref{clm:sparse-case1} and Claim~\ref{clm:sparse-case2} it follows that all $Z_i \in \W'$ satisfy preconditions of both Claim~\ref{clm:dense-forward} and Claim~\ref{clm:dense-back}. Otherwise by Claim~\ref{clm:sparse-case1} and Claim~\ref{clm:sparse-case2} we know that there exists a $j >i$ (or $<i$) such that $\frac{|C[i,j]|}{j-i+1} \le \frac{2\alpha}{\beta}$ (or $\frac{|C[j,i]|}{i-j+1} \le \frac{2\alpha}{\beta}$). For $j>i$ when we process $Z_i$ to construct the set $\mathcal{T}$ we add all the blocks of $C[i,j]$, and for $j<i$ when we process $Z_j$ we add all the blocks of $C[j,i]$. So it must be the case that the alignment $\sigma_i$ between $L_1$ and $Y_i$, $1/2$-aligns $Z_i$ to the substring $S_{i+1} \hdots S_n S_i S_{\N_{>i}} $ and there exists no $j>i$ such that $Z_j\in \W$ aligns with $S_j$ in the substring $S_{i+1} \hdots S_n$. Also, $\sigma'_i$ $1/2$-aligns $Z_i$ to the substring $S_{\N_{<i}}S_i S_1 \hdots S_{i-1} $ and there exists no $j < i$ such that $Z_j\in \W$ aligns with $S_j$ in the substring $S_1 \hdots S_{i-1}$. So by Claim~\ref{clm:dense-forward}
$$|V_H^{>i} \bigcap \N_{>}(v_i)| + \frac{\beta}{2\alpha}|V_H^{>i} \setminus \N_{>}(v_i)|\le 2(n-i)+1,$$
and by Claim~\ref{clm:dense-back}
$$|V_H^{<i} \bigcap \N_{<}(v_i)| + \frac{\beta}{2\alpha}|V_H^{<i} \setminus \N_{<}(v_i)|\le 2i-1.$$
These two claims together imply
\begin{align*}
&|V_H \bigcap \N(v_i)| +\frac{\beta}{2\alpha}|V_H \setminus \N(v_i)| \le 2n\\
\Rightarrow & |V_H \bigcap \N(v_i)| +\frac{\beta}{2\alpha}(|V_H| - |V_H \bigcap \N(v_i)|) \le 2n\\
\Rightarrow & (\frac{\beta}{2 \alpha}-1)|V_H \bigcap \N(v_i)| \ge \frac{\beta}{2 \alpha} |V_H|-2n\\
\Rightarrow & |V_H \bigcap \N(v_i)| \ge |V_H|-\frac{4 \alpha}{\beta}n.
\end{align*}
\end{proof}

Now we are ready to complete the proof of soundness (Lemma~\ref{lem:soundness}).

\begin{proof}[Proof of Lemma~\ref{lem:soundness}]
For the sake of contradiction let us assume that the LCS is of size at least $2\beta mn$. Recall, we have already seen that $|V_H| \ge \beta n$. Now we consider the following two cases depending on the size of $V_H$.

\paragraph*{Case 1: (When $|V_H|\leq \frac{\beta}{\gamma}n$)} 
Suppose $|V_H|\leq \frac{\beta}{\gamma}n$ ($=k$). Let $V'\supseteq V_H$ be an arbitrary set of size exactly $\frac{\beta}{\gamma} n$. Let $H'$ be the subgraph induced by the vertices $V'$. Using Claim~\ref{clm:large-subgraph} and Claim~\ref{clm:high-density}, we can lower bound the density of the subgraph $H'$ by:
\begin{align*}
     \frac{\frac{1}{2}\sum_{v\in V'_H} \left(|V_H| - \frac{4 \alpha}{\beta}n\right)}{ {|V'|\choose 2}}
    \geq \frac{\left( \beta - \frac{4\alpha}{\beta}\right)n \cdot \left( \beta - \frac{4\alpha}{\beta}\right)n}{ \frac{\beta}{\gamma} n \cdot \frac{\beta}{\gamma} n}
     \geq  \left( \gamma - \frac{4\alpha\gamma}{\beta^2}\right)^2.
\end{align*}
As we set $\alpha = \beta^2/8$, we get that the density of the subgraph induced by $V'$ is at least $(\gamma/2)^2$.

\paragraph*{Case 2: (When $|V_H|>  \frac{\beta}{\gamma}n$)} 
If $|V_H|>  \frac{\beta}{\gamma}n$, the density of the subgaph $H$ induced by $V_H$ is lower bounded by:
\begin{align*}
\frac{\frac{1}{2}\sum_{v\in V'_H} \left(|V_H| - \frac{4 \alpha}{\beta}n\right)}{ {|V_H|\choose 2}} &\ge \frac{|V'_H|\left(|V_H| - \frac{4 \alpha}{\beta}n\right)}{|V_H|(|V_H|-1)}\\
& \ge \frac{\left(|V_H| - \frac{4 \alpha}{\beta}n\right)^2}{|V_H|^2}\\
& = \left(1-\frac{4 \alpha n}{\beta |V_H|}\right)^2\\
& \ge (1-\frac{\gamma}{2})^2 \qquad\qquad\text{(since $|V_H|>  \frac{\beta}{\gamma}n$ and we set $\alpha = \beta^2/8$)}\\
& \ge (\gamma/2)^2\qquad \qquad\quad \text{(since $\gamma \le 1$)}.
\end{align*}
Now since density of the subgraph is at least $(\gamma/2)^2$, it follows from the following simple claim that there exists a subgraph of $H$ of size $\frac{\beta}{\gamma} n$ which has density at least $(\gamma/2)^2$. 

\begin{clm}
	Suppose a graph $G=(V, E)$ has edge density $c$, then for any $2\leq k \leq |V|$, there exists a subgraph of size $k$ with density at least $c$.
	\end{clm}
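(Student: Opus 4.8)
The plan is to prove the following elementary averaging-type statement: if a graph $G=(V,E)$ has edge density $c$, then for every $2\le k\le |V|$ there exists a $k$-vertex subgraph with density at least $c$. First I would set $N = |V|$ and recall that the density of $G$ is $|E|/\binom{N}{2}$, so that $|E| = c\binom{N}{2}$. The natural approach is a counting argument over all induced subgraphs on $k$ vertices: consider the quantity $\sum_{W}|E(G[W])|$, where the sum ranges over all $\binom{N}{k}$ subsets $W\subseteq V$ of size exactly $k$. Each edge $\{u,v\}\in E$ is counted once for every $k$-subset $W$ that contains both $u$ and $v$, and there are exactly $\binom{N-2}{k-2}$ such subsets. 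Hence $\sum_{W}|E(G[W])| = |E|\cdot\binom{N-2}{k-2}$.

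Next I would compute the average density over all these $k$-subsets. Dividing by $\binom{N}{k}$ and then by $\binom{k}{2}$, the average of $|E(G[W])|/\binom{k}{2}$ over all $k$-subsets $W$ equals
\[
\frac{|E|\binom{N-2}{k-2}}{\binom{N}{k}\binom{k}{2}}.
\]
A short simplification using $\binom{N-2}{k-2}/\binom{N}{k} = \binom{k}{2}/\binom{N}{2}$ shows this average is exactly $|E|/\binom{N}{2} = c$. Since the average of the densities of all $k$-vertex induced subgraphs equals $c$, at least one $k$-vertex induced subgraph must have density at least $c$, which is the desired conclusion. (If one prefers to avoid the identity, the same fact follows from a probabilistic phrasing: pick $W$ uniformly at random among $k$-subsets and compute $\E[|E(G[W])|]$ by linearity of expectation over edges, each surviving with probability $\binom{k}{2}/\binom{N}{2}$.)

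I do not anticipate a serious obstacle here; the only point requiring a small amount of care is the binomial-coefficient bookkeeping — in particular verifying the identity $\binom{N-2}{k-2}/\binom{N}{k} = \binom{k}{2}/\binom{N}{2}$ and checking the edge cases $k=2$ (where the claim is trivially true since a single edge has density $1\ge c$ as long as $E\ne\emptyset$, and if $E=\emptyset$ then $c=0$ and any pair works) and $k=N$ (where the subgraph is $G$ itself). The probabilistic formulation makes even this bookkeeping essentially automatic, so I would likely present that version for brevity.
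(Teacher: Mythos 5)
Your proposal is correct and follows essentially the same route as the paper: the paper also picks a $k$-subset uniformly at random, notes each edge survives with probability $\binom{n-2}{k-2}/\binom{n}{k}$, computes the expected edge count as $c\binom{k}{2}$ by linearity of expectation, and concludes by averaging. The binomial identity you flag is exactly the step the paper uses, so there is nothing to add.
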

\begin{proof}
	Let $n = |V|$. Pick a subset $H\subseteq V$ of size exactly $k$ uniformly at random. For a fixed edge $e$ in $G$, the probability that the edge $e$ is present in the subgraph induced by $H$ is exactly $\frac{{n-2 \choose k-2}}{{n\choose k}}$. Since $G$ has $c\cdot {n \choose 2}$ edges, by linearity of expectation, the expected number of edges in the subgraph induced by $H$ is equal to $c\cdot{n \choose 2}\cdot \frac{{n-2 \choose k-2}}{{n\choose k}} = c\cdot {k \choose 2}$. Therefore, the expected density of the subgraph is exactly equal to $c$. Hence, by an averaging argument, there exists a subgraph of $G$ of size $k$ with density at least $c$.
\end{proof}

In both the cases, we have shown that there exists a subgraph of size $\frac{\beta}{\gamma} n(=k)$ with density at least $(\gamma/2)^2$, which is a contradiction to the fact that we started with a NO instance of $\frac{\gamma^2}{4}$-\dks$\left(\frac{\beta}{\gamma} n, n\right)$. Therefore in this case, the size of LCS must be at most $2\beta mn$.
\end{proof}

\paragraph{Proof of Theorem~\ref{thm:main-reduction}:}
If there is no polynomial time algorithm to distinguish between the YES and NO instances of $\frac{\gamma^2}{4}$-\dks$\left(\frac{\beta}{\gamma} n, n\right)$, then using Lemma~\ref{lem:completeness} and Lemma~\ref{lem:soundness}, it follows that there is no polynomial time algorithm to distinguish between the cases when the LCS of $Y_1,\cdots,Y_n,Y'_1,\cdots,Y'_n$ is of size $\frac{\beta}{\gamma}mn$ vs. $2\beta mn$. Also note that if we use Lemma~\ref{lem:edit-codeword-det} to construct the strings $S_i$'s then the alphabet size is $O(\alpha^{-3}) = O(\beta^{-6})$. This proves the main theorem.


\section{Conclusion}
In this paper we show hardness of constant factor approximation of Multi-LCS problem with input of length $n$ over $n^{o(1)}$ sized alphabet assuming the Exponential Time Hypothesis (ETH). This is the first hardness result for approximating Multi-LCS problem for sublinear sized alphabet. To prove our result we provide a reduction from the densest $k$-subgraph problem with perfect completeness, and then use the known hardness results for the latter problem from~\cite{manurangsi2017almost}. One interesting fact is that if one could show hardness of the $\gamma$-\dks$(k, n)$ problem for $k=\Theta(\frac{n}{poly \log n})$ and $\gamma = (\log n)^{-c}$ for some $c>0$,  then due to our reduction that will directly imply constant factor hardness for Multi-LCS over poly-logarithmic sized alphabet under ETH.

\paragraph*{Acknowledgements. }Authors would like to thank anonymous reviewers for providing helpful comments on an earlier version of this paper and especially for pointing out a small technical mistake in the proof of Lemma~\ref{lem:soundness}. Authors would also like to thank Pasin Manurangsi for pointing out that for certain regimes no hardness result is known for the densest $k$-subgraph problem.

\bibliographystyle{plain}
\bibliography{ref}

\appendix

\section{Derandomized version of Lemma~\ref{lem:edit-codeword}}
\label{sec:derand}
To achieve deterministic reduction we need to construct the set of strings $S_1,\cdots,S_n $ deterministically in time $poly(n)$. For that purpose we use the notion of \emph{synchronization strings} used in the literature of \emph{insertion-deletion codes}~\cite{HS18,CHLSW19}.

\begin{definition}[$c$-long-distance $\varepsilon$-synchronization string]
\label{def:sync-string}
A string $S\in \Sigma^n$ is called a $c$-long-distance $\varepsilon$-synchronization string for some parameter $\varepsilon \in (0,1)$, if for every $1\le i<j\le i'<j'\le n$ with $i'-j \le n\cdot \mathbbm{1}_{(j+j'-i-i')>c\log n}$, $|LCS(S[i,j],S[i',j'])| \le \varepsilon (j+j'-i-i')$, where $\mathbbm{1}_{(j+j'-i-i')>c\log n}$ is the indicator function for $(j+j'-i-i')>c\log n$.
\end{definition}
Note, in the definition of $c$-long-distance $\varepsilon$-synchronization string in~\cite{CHLSW19} authors used the notion of \emph{edit distance} instead of LCS. More specifically, they specified the edit distance between $S[i,j]$ and $S[i',j'])$ is at least $(1-\varepsilon) (|S[i,j]|+|S[i',j']|)$. However both the notions can be used interchangeably since for any two strings $S,S'$, $|LCS(S,S')|=|S|+|S'|-ED(S,S')$, where the edit distance $ED(S,S')$ is defined as the minimum number of insertion and deletion operations required to transform $S$ to $S'$. One may note that, generally while defining the edit distance we also allow substitution operation. However here we are not allowing substitution operation, and that is why we are able to write the following equivalence between LCS and the edit distance of two strings $S,S'$: $|LCS(S,S')|=|S|+|S'|-ED(S,S')$. We would like to mention that in~\cite{CHLSW19} authors also used this particular version of the edit distance notion (i.e., without substitution operation).

Several constructions of such long-distance synchronization strings are given in~\cite{HS18,CHLSW19} with different parameters. However we restate one of the theorems from~\cite{CHLSW19} that we find useful for our purpose.

\begin{theorem}[Rephrasing of Theorem 5.4 of~\cite{CHLSW19}]
\label{thm:sync-const}
For any $n \in \mathbb{N}$ and parameter $\varepsilon \in (0,1)$, there is a deterministic construction of an $\varepsilon^{-2}$-long-distance $\varepsilon$-synchronization string $S \in \Sigma^n$ for some alphabet $\Sigma$ of size $O(\varepsilon^{-3})$. Moreover, for any $i \in [n]$ the substring $S[i,i+\log n]$ can be computed in time $O(\varepsilon^{-2}\log n)$.
\end{theorem}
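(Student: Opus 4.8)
The plan is to read Theorem~\ref{thm:sync-const} off Theorem~5.4 of~\cite{CHLSW19}: the two assertions describe the same construction in two interchangeable languages, so the proof is a translation rather than a new argument. First I would quote~\cite{CHLSW19}: for every $n\in\mathbb{N}$ and $\varepsilon\in(0,1)$ there is a deterministic construction of a string $S$ of length $n$ over an alphabet of size $O(\varepsilon^{-3})$, with long-distance threshold parameter $c=\varepsilon^{-2}$, such that for every admissible quadruple $1\le i<j\le i'<j'\le n$ in the long-distance regime one has $ED(S[i,j],S[i',j'])\ge (1-\varepsilon)\big(|S[i,j]|+|S[i',j']|\big)$ (edit distance with insertions and deletions only), and such that any window $S[i,i+\log n]$ is computable in time $O(\varepsilon^{-2}\log n)$. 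The alphabet size $O(\varepsilon^{-3})$, the threshold $c=\varepsilon^{-2}$, the length $n$, and the per-window running time are exactly the quantities named in Theorem~\ref{thm:sync-const}, so they transfer unchanged.

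The remaining step is to convert the edit-distance guarantee into the $LCS$ guarantee of Definition~\ref{def:sync-string}. Here I would use the identity $|LCS(A,B)|=|A|+|B|-ED(A,B)$, which holds precisely because substitutions are disallowed (this is the identity recorded in the paragraph just before the theorem). Applying it with $A=S[i,j]$ and $B=S[i',j']$ turns $ED\ge (1-\varepsilon)(|A|+|B|)$ into $|LCS(S[i,j],S[i',j'])|\le \varepsilon\big(|A|+|B|\big)$, which is the bound of Definition~\ref{def:sync-string} up to the (universally suppressed) distinction between the inclusive length $|S[i,j]|=j-i+1$ and the quantity $j-i$ appearing there. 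Since the admissibility range $1\le i<j\le i'<j'\le n$ and the long-distance trigger ``$(j+j'-i-i')>c\log n$'' (together with the gap condition $i'-j\le n\cdot\mathbf{1}_{(\cdot)}$) appear in the same form in both formulations, nothing else has to move.

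The part I would watch most carefully is precisely this bookkeeping — matching the exact shape of the long-distance condition between the two papers, and confirming that Theorem~5.4 of~\cite{CHLSW19} is stated for the full range of $\varepsilon$ and $n$ we need rather than in a superficially different normal form (for instance via a self-matching bound or for a restricted range of $\varepsilon$). If it is in such a normal form, an extra line reconciling it with Definition~\ref{def:sync-string}, or a standard amplification in $\varepsilon$, goes in before the translation step. There is no genuine mathematical obstacle here: all the substance — the explicit deterministic construction of a long-distance synchronization string over an $O(\varepsilon^{-3})$-sized alphabet, and its poly-logarithmic-time local access — is imported wholesale from~\cite{CHLSW19}, and Theorem~\ref{thm:sync-const} is simply the form of that result convenient for the alphabet-reduction step of this paper.
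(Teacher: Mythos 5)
Your proposal is correct and matches what the paper does: the theorem is imported from Theorem~5.4 of~\cite{CHLSW19}, and the only work is the translation from the edit-distance formulation (insertions and deletions only) to the LCS formulation via $|LCS(S,S')|=|S|+|S'|-ED(S,S')$, which is exactly the justification the paper gives in the paragraph preceding the theorem. Your caveats about bookkeeping the long-distance condition are reasonable but do not change the argument.
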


Now using the above we will provide deterministic construction of set of strings $S_1,\cdots,S_n$ with our desired property.
\missing*
\begin{proof}
For a specified $\alpha$ and $n$, set $\varepsilon=\alpha/2$. Then use the construction from Theorem~\ref{thm:sync-const} to get an $\varepsilon^{-2}$-long-distance $\varepsilon$-synchronization string $S$ of length $2nm$, for any $m > \frac{1}{2}\varepsilon^{-2} \log n$. The bound on $m$ is required to satisfy the condition that $(j+j'-i-i')>c\log n$ of Definition~\ref{def:sync-string}. (Note, in our case $(j+j'-i-i')=2m$ and $c=\varepsilon^{-2}$.) Then divide the string $S$ into $m$ length blocks. Finally choose alternate blocks as $S_1,\cdots,S_n$. More specifically, $S_1=S[1,m], S_2=S[2m+1,3m],\cdots, S_n=S[(2n-2)m+1,(2n-1)m]$. Now the bound on $|LCS(S_i,S_j)|$ for any $i \ne j$, directly follows from Definition~\ref{def:sync-string}.
\end{proof}

\end{document}